\newcommand\vldbavailabilityurl{URL_TO_YOUR_ARTIFACTS}
\newcommand{\myparagraph}[1]{\vspace{1pt}\noindent {\bf #1.}}
\mathchardef\mhyphen="2D
\newcommand{\whp}[1]{w.h.p.}
\newcommand{\defn}[1]{{\bf{\emph{#1}}}}
\newcommand{\BigO}[1]{\ensuremath{O(#1)}}
\newcommand{\BigOmega}[1]{\ensuremath{\Omega(#1)}}
\newcommand{\newreptheorem}[2]{\newtheorem*{rep@#1}{\rep@title}\newenvironment{rep#1}[1]{\def\rep@title{#2 \ref*{##1}}\begin{rep@#1}}{\end{rep@#1}}}
\algnewcommand\algorithmicparfor{\textbf{parfor}}
\algnewcommand\algorithmicpardo{\textbf{do}}
\algnewcommand\algorithmicendparfor{}%\textbf{end\ parfor}
\definecolor{ao}{rgb}{0.0, 0.5, 0.0}
\newcommand{\algprefix}{Algorithm\xspace}
\newcommand{\algname}[1]{\textnormal{\textsc{#1}}}
\newcommand{\ournd}{\algname{arb-nucleus-decomp}\xspace}
\newcommand{\ourcountrec}{\algname{rec-list-cliques}\xspace}
\newcommand{\sariyuce}{Sariy{\"{u}}ce\xspace}
\newcommand{\cnucleus}[3]{$#1$-$(#2, #3)$ nucleus} 
\newcommand{\nucleusdecomp}[2]{$(#1, #2)$ nucleus decomposition}
\setlist{noitemsep,topsep=0pt,parsep=0pt,partopsep=0pt}
\definecolor{change}{rgb}{0,0,0}
\newcommand{\intheapp}{appendix\xspace}
\renewenvironment{proof}[1][\proofname]{\par
  \vspace{-0.5\topsep}% remove the space after the theorem
  \pushQED{\qed}%
  \normalfont
  \topsep0pt \partopsep0pt % no space before
  \trivlist
  \item[\hskip\labelsep
        \scshape
    #1\@addpunct{.}]\ignorespaces
}{%
  \popQED\endtrivlist\@endpefalse
  \addvspace{4pt} % some space after
}
\begin{document}
\title[Theoretically and Practically Efficient Parallel Nucleus Decomposition]{Theoretically and Practically Efficient \\ Parallel Nucleus Decomposition}

%%
%% The "author" command and its associated commands are used to define the authors and their affiliations.

\author{Jessica Shi}
\affiliation{\institution{MIT CSAIL}}
\email{jeshi@mit.edu}
\author{Laxman Dhulipala}
\affiliation{\institution{MIT CSAIL}}
\email{laxman@mit.edu}
\author{Julian Shun}
\affiliation{\institution{MIT CSAIL}}
\email{jshun@mit.edu}

%%
%% The abstract is a short summary of the work to be presented in the
%% article.
\begin{abstract}
This paper studies the nucleus decomposition problem, which has been shown to be useful in finding dense substructures in graphs. 
We present a novel parallel algorithm that is efficient both in theory and in practice. Our algorithm achieves a work complexity matching the best sequential algorithm while also having low depth (parallel running time), which significantly improves upon the only existing parallel nucleus decomposition algorithm (\sariyuce \textit{et al.}, PVLDB 2018). 
The key to the theoretical efficiency of our algorithm is 
%a new lemma that bounds the amount of work done when peeling cliques from the graph, combined with 
the use of a theoretically-efficient parallel algorithms for clique listing and bucketing. 
We introduce several new practical optimizations, including a new multi-level hash table structure to store information on cliques space-efficiently and a technique for traversing this structure cache-efficiently. 
On a 30-core machine with two-way hyper-threading on real-world graphs, we achieve up to a 55x speedup over the state-of-the-art parallel nucleus decomposition algorithm by \sariyuce \textit{et al.}, and up to a 40x self-relative parallel speedup.  We are able to efficiently compute larger nucleus decompositions than prior work on several million-scale graphs for the first time.

\end{abstract}

\maketitle

\ifdefempty{\vldbavailabilityurl}{}{
\vspace{-5pt}
\begingroup\small\noindent\raggedright\textbf{PVLDB Artifact Availability:}\\
The source code, data, and/or other artifacts have been made available at \url{https://github.com/jeshi96/arb-nucleus-decomp}.
\endgroup
}
%%% VLDB block end %%%
\section{Introduction}

Discovering dense substructures in graphs is a fundamental topic in graph 
mining, and has been studied across many areas including
computational biology~\cite{bader2003automated,fratkin2006motifcut},
spam and fraud-detection~\cite{gibson2005discovering}, and 
large-scale network analysis~\cite{angel2014dense}.
Recently, \sariyuce{} \textit{et al.}~\cite{Sariyuce2017} introduced the nucleus decomposition problem, which
generalizes the influential notions of $k$-cores and $k$-trusses to $k$-$(r,s)$ nucleii,
and can better capture higher-order structures in the graph. 
Informally, a $k$-$(r,s)$ nucleus is the maximal induced subgraph such that every
$r$-clique in the subgraph is contained in at least $k$ $s$-cliques.
The goal of the \nucleusdecomp{r}{s} problem is to identify for each $r$-clique in the graph, the
largest $k$ such that it is in a $k$-$(r,s)$ nucleus.

Solving the \nucleusdecomp{r}{s} problem is a significant computational challenge for several reasons.
First, simply counting and enumerating $s$-cliques is a challenging task, even for
modest $s$. 
Second, storing information for all $r$-cliques can require a large amount of space, even 
for relatively small graphs.
Third, engineering fast and high-performance solutions to this problem requires taking
advantage of parallelism due to the computationally-intensive nature of listing cliques.
{\color{change} There are two well-known parallel paradigms for approaching the \nucleusdecomp{r}{s} problem, a global peeling-based model and a local update model that iterates until convergence~\cite{sariyuce2017parallel}. The former is inherently challenging to parallelize due to sequential dependencies and necessary synchronization steps~\cite{sariyuce2017parallel}, which we address in this paper, and we demonstrate that the latter requires orders of magnitude more work to converge to the same solution and is thus less performant. }

Lastly, it is unknown whether existing sequential and parallel algorithms for this
problem are theoretically efficient. 
Notably, existing algorithms perform more work than the
fastest theoretical algorithms for $k$-clique enumeration on sparse graphs~\cite{ChNi85,shi2020parallel}, and it is open whether
one can solve the \nucleusdecomp{r}{s} problem in the same work as $s$-clique enumeration.

In this paper, we design a novel parallel algorithm for the nucleus decomposition problem.
We address the computational challenges by designing 
a theoretically efficient parallel algorithm for \nucleusdecomp{r}{s} that nearly matches the work for $s$-clique enumeration, along with new techniques that improve the space
and cache efficiency of our solutions.
The key to our theoretical efficiency is combining a combinatorial lemma bounding the total sum over all $k$-cliques in the graph of the minimum degree vertex in this clique~\cite{EdRoSe20} with a theoretically efficient $k$-clique listing algorithm~\cite{shi2020parallel},
enabling us to to provide a strong upper bound on the overall work of our
algorithm.\footnote{In the version of our paper appearing in the \emph{International Conference on Very Large Databases (VLDB)}, 2022~\cite{vldbversion}, we claimed the combinatorial lemma as a new contribution. However, we were later made aware that the lemma first appears in Eden \textit{et al.}'s work~\cite{EdRoSe20}.}
As a byproduct, we also obtain the most theoretically-efficient serial algorithm for \nucleusdecomp{r}{s}.
We provide several new optimizations for improving the practical efficiency of our algorithm, including a new multi-level hash table structure to space efficiently store data associated with cliques, a technique for efficiently traversing this structure in a cache-friendly manner, and methods for reducing contention and further reducing space usage. 
Finally, we experimentally study our parallel algorithm on various real-world graphs and $(r,s)$ values, and find that it achieves
between 3.31--40.14x self-relative speedup on a 30-core machine with two-way hyper-threading.
The only existing parallel algorithm for nucleus decomposition is by \sariyuce{} \textit{et al.}~\cite{sariyuce2017parallel}, but their algorithm requires much more work than the best sequential algorithm. Our algorithm achieves
between 1.04--54.96x speedup over the state-of-the-art parallel
nucleus decomposition of \sariyuce{} \textit{et al.}, and our algorithm can scale to larger $(r,s)$ values, due to our improved theoretical efficiency and our proposed optimizations. We are able to compute
the $(r,s)$ nucleus decomposition for $r>3$ and $s>4$ on several million-scale
graphs for the first time.
We summarize our contributions below:
\begin{itemize}[topsep=1pt,itemsep=0pt,parsep=0pt,leftmargin=10pt]
  \item The first theoretically-efficient parallel algorithm for the nucleus decomposition problem.
  \item A collection of practical optimizations that enable us to design a fast implementation of our algorithm.
  \item Comprehensive experiments showing that our new algorithm achieves
  up to a 55x speedup over the state-of-the-art algorithm 
  by \sariyuce{} \textit{et al.}, and up to a 40x self-relative parallel speedup on a 30-core machine with two-way hyper-threading.
\end{itemize}

\section{Related Work}

The nucleus decomposition problem is inspired by and closely related to the $k$-core problem, which was defined independently by Seidman~\cite{seidman83network}, and by Matula and Beck~\cite{matula83smallest}.
The $k$-core of a graph is the maximal subgraph of the graph where the induced degree 
of every vertex is at least $k$.
The \emph{coreness} of a vertex is the maximum value of $k$ such that the vertex participates in a $k$-core.
Matula and Beck provided a linear time algorithm based on peeling vertices that computes the coreness value of all vertices~\cite{matula83smallest}.

In subsequent years, many concepts capturing dense near-clique substructures were proposed, including $k$-trusses (or triangle-cores), $k$-plexes~\cite{seidmanfoster}, and $n$-clans and $n$-clubs~\cite{mokken1979cliques}.
In particular, $k$-trusses were proposed independently by Cohen~\cite{cohen2008trusses}, Zhang \textit{et al.}~\cite{zhang2012extracting}, and Zhou \textit{et al.}~\cite{zhao2012large} with the goal of efficiently obtaining dense clique-like substructures.
Unlike other near-clique substructures like $k$-plexes, $n$-clans, and $n$-clubs, which are computationally intractable to enumerate and count, $k$-trusses can be efficiently found in polynomial-time.
Many parallel, external-memory, and distributed algorithms have been developed in the past decade for $k$-cores~\cite{montresor2012distributed, farach2014computing, khaouid2015k, Kabir2017, DhBlSh17,Wen2019} and {\color{change}$k$-trusses~\cite{Wang2012, chen2014distributed, Zou16, kabir2017parallel, smith2017truss, ChLaSuWaLu20,CoDeGrMaVe18, LoSpKuSrPoFrMc18,  BlLoKi19}}, and computing all trussness values of a graph is one of the challenge problems in the yearly MIT GraphChallenge~\cite{samsi2017static}.
A related problem is to compute the $k$-clique densest subgraph~\cite{Tsourakakis15} and $(k,\Psi)$-core~\cite{FaYuChLaLi19}, for which efficient parallel algorithms have been recently designed~\cite{shi2020parallel}.
The concept of a \nucleusdecomp{r}{s} was first proposed by \sariyuce{} \textit{et al.}\ as a principled approach to discovering dense substructures in graphs that generalizes $k$-cores and $k$-trusses~\cite{Sariyuce2017}. They also proposed an algorithm for efficiently finding the hierarchy associated with a \nucleusdecomp{r}{s}~\cite{SaPi16}.
\sariyuce{} \textit{et al.}\ later proposed parallel algorithms for nucleus decomposition based on local computation~\cite{sariyuce2017parallel}.
Recent work has studied nucleus decomposition in probabilistic graphs~\cite{esfahani2020nucleus}.

Clique counting and enumeration are fundamental subproblems required for computing nucleus decompositions. 
A trivial algorithm enumerates $c$-cliques in $O(n^{c})$ work, and using a thresholding argument improves the work for counting to $O(m^{c/2})$~\cite{AYZ97}. 
The current fastest combinatorial algorithms for $c$-clique enumeration for sparse graphs are based on the seminal results of Chiba and Nishizeki~\cite{ChNi85}, who show that all $c$-cliques can be enumerated in $O(m\alpha^{c-2})$ where $\alpha$ is the arboricity of the graph. 
We defer to the survey of Williams for an overview of theoretical algorithms for this problem~\cite{vassilevska2009efficient}. 
The current state-of-the-art practical algorithms for $k$-clique counting are all based on the Chiba-Nishizeki algorithm~\cite{Danisch18, shi2020parallel, li2020ordering}.

Researchers have also studied $k$-core-like computations in 
bipartite graphs~\cite{Shi2020,Wang2020,lakhotia20receipt,SariyuceP18, liu2020alphabeta}, as well as how to maintain $k$-cores and $k$-trusses in dynamic graphs~\cite{Li2014, Wen2019,Zhang2017,sariyuce2016incremental,Akbas2017,Huang2014, aridhi2016distributed, lin2021dynamickcore,Luo2021,Hua2020,SCS20,Jin2018,Luo2020,Zhang2019a,Huang2015}.
Very recently, \sariyuce{} proposed a motif-based decomposition, which generalizes 
the connection between $r$-cliques and $s$-cliques in nucleus decomposition to any pair of subgraphs~\cite{sariyuce2021motif}.

\section{Preliminaries}\label{sec:prelims}
\myparagraph{Graph Notation and Definitions} We consider graphs $G =(V,E)$ to be
simple and undirected, where $n = |V|$ and $m = |E|$. For analysis, we assume $m=\BigOmega{n}$.
For vertices $v \in V$, we denote by $\text{deg}(v)$ the degree of $v$, and we denote by $N_G(v)$ the neighborhood of $v$ in $G$. If the graph is unambiguous, we let $N(v)$ denote the neighborhood of $v$. For a directed graph $DG$, $N(v) = N_{DG}(v)$ denotes the out-neighborhood of $v$.
The \defn{arboricity ($\bm{\alpha}$)} of a graph is the minimum number
of spanning forests needed to cover the graph. In general, $\alpha$ is
upper bounded by $\BigO{\sqrt{m}}$ and lower bounded by
$\BigOmega{1}$~\cite{ChNi85}.

A \defn{\cnucleus{c}{r}{s}} is a maximal subgraph $H$ of an undirected graph formed by the union of $s$-cliques $C_s$, such that each $r$-clique $C_r$ in $H$ has induced $s$-clique degree at least $c$ (i.e., each $r$-clique is contained within at least $c$ induced $s$-cliques). 
The \defn{\nucleusdecomp{r}{s} problem} is to compute all non-empty $(r,s)$-nuclei. Our algorithm outputs the \defn{$(r,s)$-clique core number} of each $r$-clique $C_r$, or the maximum $c$ such that $C_r$ is contained within a \cnucleus{c}{r}{s}.\footnote{The original definition of  \nucleusdecomp{r}{s} is stricter, in that it additionally requires any two $r$-cliques in the maximal subgraph $H$ to be connected via $s$-cliques~\cite{Sariyuce2017,SaPi16}. This requires additional work to partition the $r$-cliques, which the previous parallel algorithm~\cite{sariyuce2017parallel} does not perform, and is also out of our scope of this paper.}
The $k$-core and $k$-truss problems correspond to the $k$-$(1,2)$ and $k$-$(2,3)$ nucleus, respectively.

\myparagraph{Graph Storage} 
For theoretical analysis, we assume that our graphs are represented
in an adjacency hash table, where each vertex is associated with a
parallel hash table of its neighbors. 
In practice, we store
graphs in compressed sparse row (CSR) format.

\myparagraph{Model of Computation} 
We use the fundamental work-span model for our theoretical analysis, which is widely used in analyzing shared-memory parallel algorithms~\cite{JaJa92,CLRS}, with many recent practical uses~\cite{ShunRFM16,Dhulipala2020, sun2019supporting, wang2019pardbscan}.
The \defn{work} $W$ of an algorithm is the total number of operations, and the \defn{span} $S$ of an algorithm is the longest dependency path. Brent's scheduling theorem~\cite{Brent1974} upper bounds the parallel running time of an algorithm by $W/P + S$ where $P$ is the number of processors. A randomized work-stealing scheduler,
such as that in Cilk~\cite{BlLe99}, can be used in practice to obtain this running time in expectation.
The goal of our work is to develop \defn{work-efficient} parallel algorithms under this model, or algorithms with a work complexity that asymptotically matches the best-known sequential time complexity for the given problem. 
We assume that this model supports concurrent reads, concurrent writes, compare-and-swaps, atomic adds, and fetch-and-adds in $O(1)$ work and span. 

\myparagraph{Parallel Primitives} We use the following parallel primitives in our algorithms. 
Parallel \defn{prefix sum} takes as input a sequence $A$
of length $n$, an identity $\varepsilon$, and an associative binary
operator $\oplus$, and returns the sequence $B$ of length $n$ where
$B[i] = \bigoplus_{j < i} A[j] \oplus \varepsilon$. Parallel \defn{filter}
takes as input a sequence $A$ of length $n$ and a predicate function
$f$, and returns the sequence $B$ containing all $a \in A$ such that
$f(a)$ is true, maintaining the order that elements appeared in $A$. Both algorithms take $\BigO{n}$ work and $\BigO{\log n}$ span~\cite{JaJa92}. {\color{change} These primitives are basic sequence operations that represent essential building blocks of efficient parallel algorithms. They are implemented in practice using efficient parallel divide-and-conquer subroutines.} 

We also use \defn{parallel hash tables} that support insertion, deletion, and membership queries, and that can perform $n$ operations in $O(n)$
work and $O(\log n)$ span with high probability (\whp{})~\cite{gil91a}.\footnote{We
  say $O(f(n))$ \defn{with high probability (\whp{})} to indicate
  $O(cf(n))$ with probability at least $1-n^{-c}$ for $c \geq 1$,
  where $n$ is the input size.}  Given
two parallel hash tables $\mathcal{T}_1$ and $\mathcal{T}_2$ of size $n_1$ and
$n_2$ respectively, the intersection $\mathcal{T}_1 \cap
\mathcal{T}_2$ can be computed in $\BigO{\min(n_1,n_2)}$ work and
$\BigO{\log (n_1+n_2)}$ span \whp{}
For multiple hash tables $\mathcal{T}_i$ for $i \in [m]$, each of size $n_i$, the intersection $\bigcap_{i \in [m]} \mathcal{T}_i$ can be computed in $\BigO{\min_{i \in [m]}(n_i)}$ work and $\BigO{\log (\sum_{i \in [m]} n_i)}$ span \whp{}
These subroutines are essential to our clique counting and listing subroutines, allowing us to find the intersections of adjacency lists of vertices. We also use parallel hash tables in our algorithms as efficient data structures for $r$-clique access and aggregation.

\myparagraph{Parallel Bucketing} A \defn{parallel bucketing structure}
maintains a mapping from identifiers to buckets, which we use to group
$r$-cliques according to their incident $s$-clique counts. The
bucket value of identifiers can change, and the structure can efficiently update these buckets. 
We take identifiers to be values associated with
$r$-cliques, and use the structure to repeatedly extract all $r$-cliques in
the minimum bucket to process, which can cause the bucket
values of other $r$-cliques to change (other $r$-cliques that share vertices with extracted
$r$-cliques in our algorithm). 
Theoretically, the batch-parallel Fibonacci heap by Shi
and Shun~\cite{Shi2020} can be used to implement a bucketing structure storing $n$ objects that supports $k$ bucket insertions in $O(k)$ amortized expected
work and $O(\log n)$ span \whp{}, $k$ bucket update operations in
$O(k)$ amortized expected work and $O(\log^{2} n)$ span \whp{}, and extracts
the minimum bucket in $O(\log n)$ amortized expected work and $O(\log
n)$ span \whp{}
In our $(r, s)$ nucleus decomposition algorithm, we must extract the bucket of $r$-cliques with the minimum $s$-clique count in every round, and the work-efficiency of this Fibonacci heap contributes to our work-efficient bounds. 
However, our implementations use the bucketing structure by Dhulipala et
al.~\cite{DhBlSh17}, which we found to be more efficient in practice.  This structure obtains performance improvements by only materializing a constant number of the lowest buckets, reducing the number of times each $r$-clique's bucket must be updated. In retrieving new buckets, the structure also skips over large ranges of empty buckets containing no $r$-cliques, allowing for fast retrieval of the minimum non-empty bucket.

\myparagraph{$O(\alpha)$-Orientation} We use in our algorithms the parallel $c$-clique counting and listing algorithm by Shi \textit{et al.}~\cite{shi2020parallel}, which relies on directing a graph using a low out-degree orientation in order to reduce the amount of work that must be performed to find $c$-cliques. 
An \defn{$a$-orientation} of an undirected graph is a total ordering on the vertices such that when edges in the graph are directed from vertices lower in the ordering to vertices higher in the ordering, the out-degree of each vertex is bounded by $a$.
Shi \textit{et al.} provide parallel work-efficient algorithms to obtain an $O(\alpha)$-orientation, namely the parallel Barenboim-Elkin algorithm which takes $O(m)$ work and $O(\log^2 n)$ span, and the parallel Goodrich-Pszona algorithm which takes $O(m)$ work and $O(\log^2 n)$ span \whp{}
Besta et al.~\cite{Besta2020} present a parallel $O(\alpha)$-orientation algorithm that takes $O(m)$ work and $O(\log^2n)$ span.

\section{$(r,s)$ nucleus decomposition}
We present here our parallel work-efficient $(r,s)$ nucleus decomposition algorithm. Importantly, we introduce new theoretical bounds for $(r,s)$ nucleus decomposition, which also improve upon the previous best sequential bounds. We discuss in Section \ref{sec:alg-prelims} a key $s$-clique counting subroutine, and we present \ournd{}, our parallel $(r, s)$ nucleus decomposition algorithm, in Section \ref{sec:alg-nd}.

\subsection{Recursive $s$-clique Counting Algorithm} \label{sec:alg-prelims}

We first introduce an important subroutine, \ourcountrec{}, based on previous work from Shi \textit{et al.}~\cite{shi2020parallel}, which recursively finds and lists $c$-cliques in parallel. 
This subroutine is based on a state-of-the-art $c$-clique listing algorithm, which in practice balances performance and memory efficiency, outperforming other baselines, particularly for large graphs with hundreds of billions of edges~\cite{shi2020parallel}. 
This subroutine has been modified from previous work to integrate in our parallel $(r,s)$ nucleus decomposition algorithm, to both count the number of $s$-cliques incident on each $r$-clique, and  update the $s$-clique counts after peeling subsets of $r$-cliques. The main idea for \ourcountrec{} is to iteratively grow each $c$-clique by maintaining at every step a set of candidate vertices that are neighbors to all vertices in the $c$-clique so far, and prune this set as we add more vertices to the $c$-clique.

The pseudocode for the algorithm is shown in Algorithm~\ref{alg:count}.
\ourcountrec{} takes as input a directed graph $DG$, a set $I$ of potential neighbors to complete the clique (which for $c$-clique listing is initially $V$), the recursive level {\color{change}$r\ell$ (which for $c$-clique listing is initially set to $c$)}, a set $C$ of vertices in the clique so far (which for $s$-clique listing is initially empty), and a function $f$ to apply to each discovered $c$-clique. The directed graph $DG$ is an $O(\alpha)$-orientation of the original undirected graph, which allows us to reduce the work required for computing intersections.
\ourcountrec{} then uses repeated intersections on the set $I$ and the directed neighbors of each vertex in $I$ to find valid vertices to add to the clique $C$ (Line \ref{line:intersect}), and recurses on the updated clique (Line \ref{line:rec-count-recurse}). At the final recursive level, \ourcountrec{} applies the user-specified function $f$ on each discovered clique (Line \ref{line:base-case}).
Assuming that $DG$ is an $O(\alpha)$-oriented graph, and excluding the time required to obtain $DG$, \ourcountrec{} can perform $c$-clique listing in $O(m\alpha^{c-2})$ work and $O(c \log n)$ span \whp{}~\cite{shi2020parallel}.

\begin{algorithm}[!t]
  \footnotesize
 \begin{algorithmic}[1]
 \Procedure{rec-list-cliques}{$DG$, $I$, {\color{change}$r\ell$}, $C$, $f$}
 \State \Comment{$DG$ is the directed graph, $I$ is the set of potential neighbors to complete the clique, {\color{change}$r\ell$} is the recursive level, $C$ is the set of vertices in the clique so far, and $f$ is the desired function to apply to $c$-cliques.}
  \If{{\color{change}$r\ell = 1$}}
   \ParFor{$v$ in $I$}
   \State Apply $f$ on the $c$-clique $C \cup \{v\}$ \label{line:base-case}
   \EndParFor
    \State   \Return
 \EndIf
\ParFor{$v$ in $I$}
  \State $I' \leftarrow$ \algname{intersect}($I$, $N_{DG}(v)$) \Comment{Intersect $I$ with directed neighbors of $v$} \label{line:intersect}
  \State \ourcountrec{}($DG$, $I'$, {\color{change}$r\ell-1$}, $C \cup \{v\}$, $f$) \Comment{Add $v$ to the $c$-clique and recurse}\label{line:rec-count-recurse}
\EndParFor
\EndProcedure
 \end{algorithmic}
\caption{Parallel $c$-clique listing algorithm.} \label{alg:count}
\end{algorithm}

\subsection{$(r,s)$ Nucleus Decomposition Algorithm}\label{sec:alg-nd}
We now describe our parallel nucleus decomposition algorithm, \ournd{}.
\ournd{} computes the $(r,s)$ nucleus decomposition by first computing
and storing the  incident $s$-clique counts of each $r$-clique.
It then proceeds in rounds, where in each round, it peels, or implicitly removes, the $r$-cliques 
with the minimum $s$-clique counts. It updates the $s$-clique counts 
of the remaining unpeeled $r$-cliques, by decrementing the count by 1 for each $s$-clique 
that the unpeeled $r$-clique shares peeled $r$-cliques with. \ournd{} uses 
\ourcountrec{} as a subroutine, to compute and update $s$-clique counts.

\begin{figure}[t]
    \centering
   \begin{subfigure}{.36\textwidth}
   \centering
   \includegraphics[width=\columnwidth, page=1]{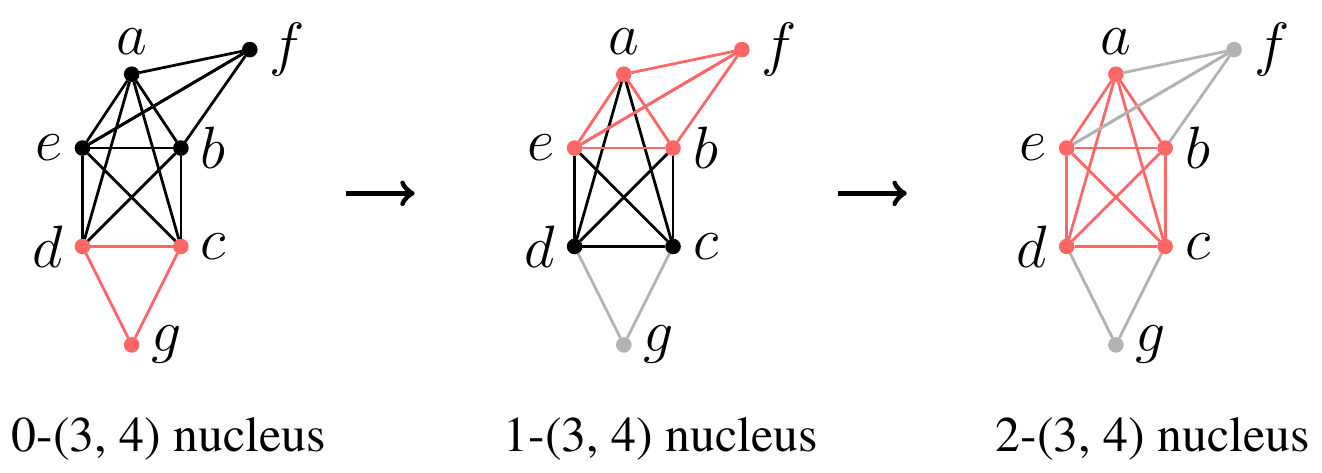}
   \end{subfigure}%
   \hfill
   \begin{subfigure}{.19\textwidth}
   \centering
   \includegraphics[width=\columnwidth, page=2]{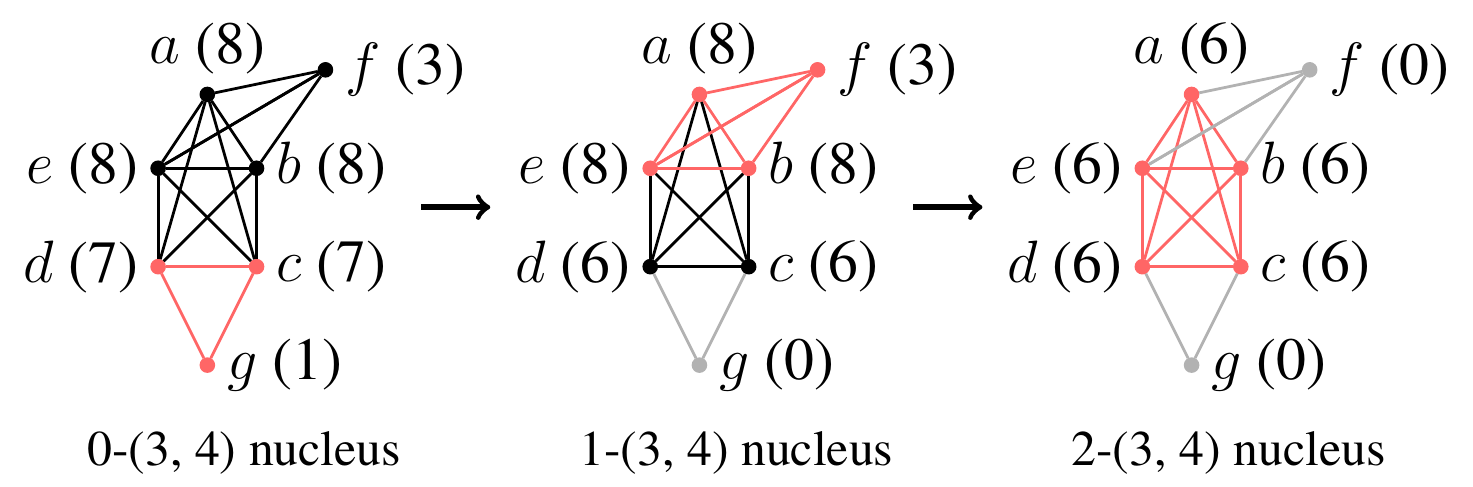}
      \end{subfigure}
   \caption{An example of our parallel nucleus decomposition algorithm \ournd{} for $(r,s) = (3,4)$. At each step, we peel in parallel all triangles (3-cliques) with the minimum 4-clique count; the vertices and edges that compose of these triangles  are highlighted in red. We then recompute the 4-clique count on the remaining triangles. Vertices and edges that no longer participate in any active triangles, due to previously peeled triangles, are shown in gray. 
   Each step is labeled with the $k$-(3,4) nucleus discovered, where $k$ is the 4-clique count of the triangles highlighted in red. The figure below labels each $k$-(3,4) nucleus.
   }
    \label{fig:nd-example}
\end{figure}

\myparagraph{Example}
 An example of our algorithm for $(r,s)=(3,4)$ is shown in 
 Figure \ref{fig:nd-example}. There are 14 total triangles in 
 the example graph, namely those given by any three vertices in $\{a, b, c, d, e\}$, 
 and the additional triangles $abf$, $bef$, $aef$, and $cdg$. At 
 the start of the algorithm, $cdg$ is incident to no 4-cliques, 
 while $abf$, $bef$, and $aef$ are each incident to one 4-clique. 
 Also, $abe$ is incident to three 4-cliques, and the rest of the triangles
 are incident to two 4-cliques. Thus, only $cdg$ is peeled in the first round, 
 and has a (3, 4)-clique-core number of 0. Then, $abf$, $bef$, and $aef$ are
 peeled simultaneously in the second round, each with a 4-clique count of one, 
 which is also their (3, 4)-clique-core number. Peeling these triangles updates the 
 4-clique count of $abe$ to two, and in the third round, all remaining
 triangles have the same 4-clique count (and form the 2-$(3,4)$ nucleus) 
 and are peeled simultaneously, completing the algorithm.

\begin{figure}[t]
\centering
     \includegraphics[width=\columnwidth]{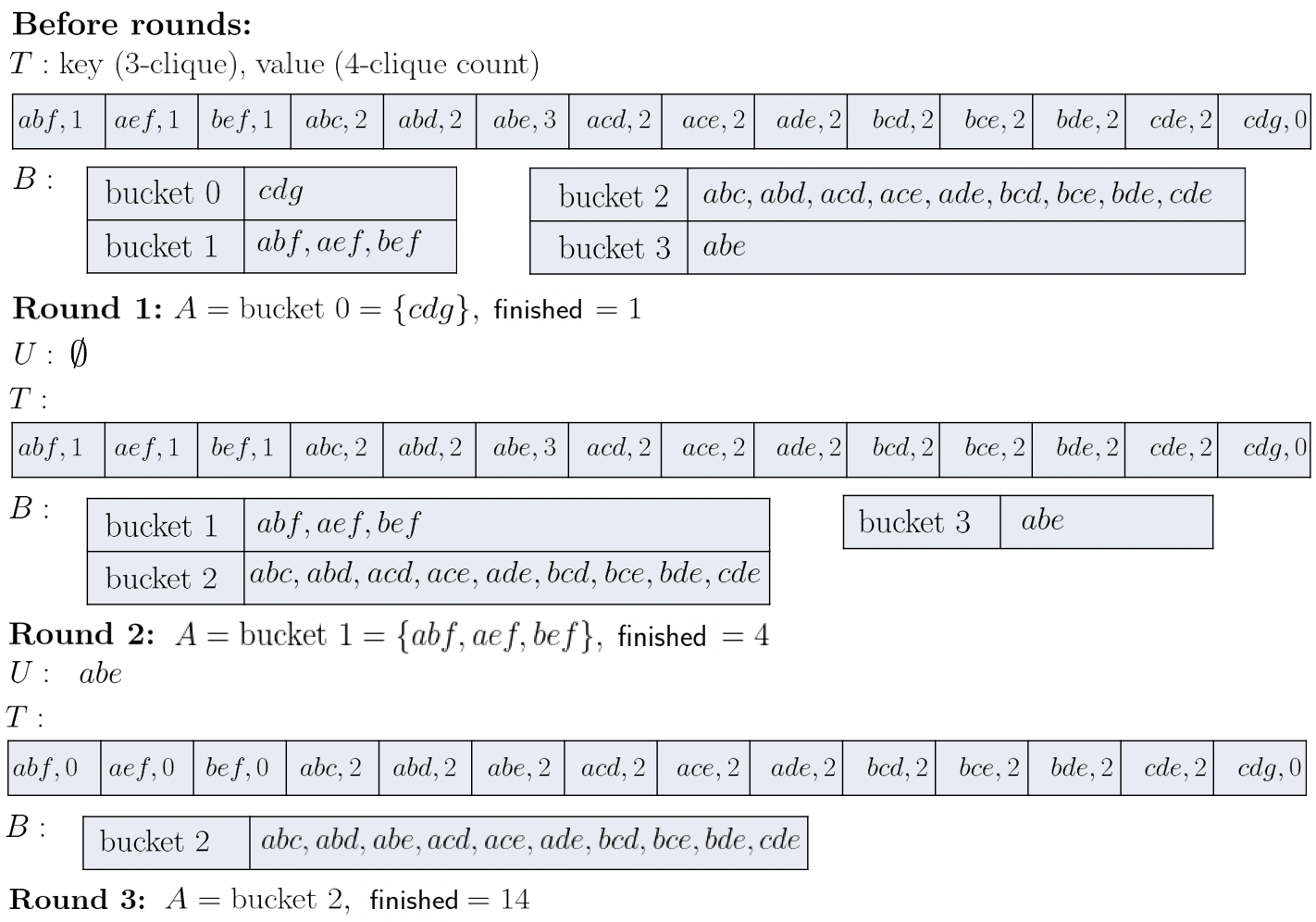}
   \caption{\color{change} An example of the data structures in \ournd{} during each round of $(3, 4)$ nucleus decomposition, on the graph in Figure \ref{fig:nd-example}.}
    \label{fig:walkthrough-nd-example}
\end{figure}

\myparagraph{Our Algorithm}
We now provide a more detailed description of the algorithm. \algprefix~\ref{alg:nd} 
presents the pseudocode for \ournd{}. We refer to Figure \ref{fig:walkthrough-nd-example},
which shows the state of each data structure after each round 
of \ournd{}, for an example of {\color{change} $(3, 4)$} nucleus decomposition on the graph in Figure~\ref{fig:nd-example}.
\ournd{}  
first directs the graph $G$ such that every vertex has out-degree 
$O(\alpha)$ (Line~\ref{line:orient}), using an efficient low out-degree 
orientation algorithm by Shi \textit{et al.} \cite{shi2020parallel}. 
Then, it initializes a parallel hash table $T$ to store $s$-clique 
counts, keyed by $r$-clique counts, and calls the recursive subroutine 
\ourcountrec{} to count and store the number of $s$-cliques incident on 
each $r$-clique (Lines~\ref{line:initialize-t}--\ref{line:rec-count}). 
It uses \algname{count-func} (Lines \ref{line:count-f-begin}--\ref{line:count-f-end})
to atomically increment the count for each $r$-clique found in each discovered $s$-clique. As
shown in Figure \ref{fig:walkthrough-nd-example}, before any rounds of peeling, $T$ contains 
the {\color{change}4-clique count incident to each triangle.}
The algorithm also initializes a parallel bucketing structure $B$ that stores 
sets of $r$-cliques with the same $s$-clique counts 
(Line~\ref{line:bucketinit}).
{\color{change}
We have four initial buckets in Figure \ref{fig:walkthrough-nd-example}. The first bucket contains $cdg$, which is incident to no $4$-cliques, and the second bucket contains $abf$, $aef$, and $bef$, which are incident to exactly one 4-clique. The third bucket contains the triangles incident to exactly two 4-cliques, $abc$, $abd$, $acd$, $ace$, $ade$, $bcd$, $bce$, $bde$, and $cde$. The final bucket contains $abe$, which is incident to exactly three 4-cliques.
}

\begin{algorithm}[!t]
  \footnotesize
 \begin{algorithmic}[1]
 \State Initialize $r$, $s$ \Comment{$r$ and $s$ for $(r,s)$ nucleus decomposition}
 
 \Procedure {Count-Func}{$T$, $S$}\label{line:count-f-begin} \footnote{\label{note1}\color{change} When \algname{Count-Func} and \algname{Update-Func} are invoked on lines 17 and 22, all arguments except the last argument $S$ are bound to each function. This is because these functions are then called in \algname{rec-list-cliques}, where they take as input an $s$-clique $S$, which is precisely the last argument.} 
  \ParFor{all size $r$ subsets $R \subset S$}
 \State Atomically add $1$ to the $s$-clique count $T[R]$ \label{line:sort-r-clique}
  \EndParFor \label{line:count-f-end}
  \EndProcedure
 
  \smallskip
 \Procedure {Update-Func}{$U$, $A$, $T$, $S$}  \textsuperscript{\ref{note1}}
 \State Let $U' \leftarrow \{ R \subset S \mid |R| = r$ and $R \not\in A\}$  \label{line:update-f-begin}
 \State Let $a$ be the \# of size $r$ subsets $R \subset S$ such that $R \in A$
 \If{any $R \in U'$ has been previously peeled} \label{line:previously-peeled-check}
 \State \Return
 \EndIf
 \ParFor{$R$ in $U'$}
 \State Atomically subtract $1 / a$ from the $s$-clique count $T[R]$\label{line:update-a-count}
 \State Add $R$ to $U$\label{line:update-u}
 \EndParFor \label{line:update-f-end}
 \EndProcedure
 
  \smallskip
\Procedure{Update}{$G=(V,E), DG, A, T$} \label{line:update}
 \State Initialize $U$ to be a parallel hash table to store $r$-cliques with updated $s$-clique counts after peeling $A$ \label{line:initialize-u}
\ParFor{$R$ in $A$}
  \State $I\leftarrow$ \algname{intersect}($N_G(v)$ for $v \in R$) \Comment{Intersect the undirected neighbors of $v \in R$} \label{line:intersect-undirected-r}
  \State \ourcountrec{}($DG$, $I$, $s-r$, $R$, \algname{update-func}($U$, $A$, $T$)) \label{line:peel-count-cliques}
\EndParFor
\State \Return $U$\label{line:update-end}
\EndProcedure

 \smallskip
\Procedure {\ournd{}}{$G = (V,E)$, \algname{Orient}}
\State $DG \leftarrow$ \algname{Orient}($G$) \Comment{Apply a user-specified orientation algorithm} \label{line:orient}
\State Initialize $T$ to be a parallel hash table with $r$-cliques as keys, and $s$-clique counts as values \label{line:initialize-t}
\State \ourcountrec{}($DG$, $V$, $s$, $\emptyset$, \algname{count-func}($T$)) \Comment{Count $s$-cliques} \label{line:rec-count}
\State Let $B$ be a bucketing structure mapping each $r$-clique to a bucket based on \# of $s$-cliques \label{line:bucketinit}
\State $\mathsf{finished} \leftarrow 0$\label{line:set-finished-zero}
\While{$\mathsf{finished} < |T|$} 
\State $A \leftarrow$ $r$-cliques in the next bucket in $B$ (to be peeled)\label{line:extractbucket}
\State $\mathsf{finished} \leftarrow \mathsf{finished} + |A|$\label{line:updatefinished}
\State $U \leftarrow$ \algname{Update}$(G, DG, A, T)$ \Comment{Update \# of $s$-cliques and return $r$-cliques with changed $s$-clique counts}\label{line:updatecliques}
\State Update the buckets of $r$-cliques in $U$, peeling $A$ \label{line:updatebuckets}
\EndWhile
\State \Return $B$
  \EndProcedure
 \end{algorithmic}
\caption{Parallel $(r,s)$ nucleus decomposition algorithm}
 \label{alg:nd}
\end{algorithm}

While not all of the $r$-cliques have been 
peeled, the algorithm repeatedly obtains the $r$-cliques incident upon 
the lowest number of induced $s$-cliques
(Line~\ref{line:extractbucket}), updates the count of the number of 
peeled $r$-cliques (Line~\ref{line:updatefinished}), and updates the 
$s$-clique counts of $r$-cliques that participate in $s$-cliques with 
peeled $r$-cliques (Line~\ref{line:updatecliques}). {\color{change}In Figure 
\ref{fig:walkthrough-nd-example}, we see that in the first round, the
bucket with the least $s$-clique count is bucket 0, and $\mathsf{finished}$ is 
updated to the size of the bucket, or one triangle. 
No 4-cliques are involved, so no updates are made to $T$, and $U$ remains empty.
}
\ournd{} then
updates the buckets for $r$-cliques with changed $s$-clique counts 
(Line~\ref{line:updatebuckets}), and repeats until all $r$-cliques have 
been peeled. At the end, the algorithm returns the bucketing structure, which 
maintains the $(r, s)$-core number of each $r$-clique. 
{\color{change}In the second round in
Figure \ref{fig:walkthrough-nd-example}, the bucket with the least 4-clique count is bucket 1, containing $abf$, $aef$, and $bef$. We add 3 to $\mathsf{finished}$, making its value 4. The sole 4-clique incident to these triangles is $abef$, so when these triangles are peeled, there is one fewer 4-clique incident on $abe$. Thus, the 4-clique count stored on $abe$ in $T$ is decremented by one, and $abe$ is returned in the set $U$ as the only triangle with a changed 4-clique count. Note that the (3, 4)-clique core number of $abf$, $aef$, and $bef$ is thus 1, which is implicitly maintained upon their removal from $B$. The bucket of $abe$ is updated to $2$, since $abe$ now participates in only two 4-cliques.
Then, in the final round, the rest of the triangles are all in the peeled bucket, bucket 2, and the $\mathsf{finished}$ count is updated to 14, or the total number of triangles. The implicitly maintained (3, 4)-clique-core number of these triangles is 2, and since no triangles remain unpeeled, there are no further updates to the data structure and the algorithm returns.
}

The subroutine \algname{Update} 
(Lines~\ref{line:update}--\ref{line:update-end}) is the main subroutine of \ournd{}, 
used on Line~\ref{line:updatecliques}. It takes as additional input the directed 
graph $DG$, a set $A$ of $r$-cliques to peel, and the parallel hash table $T$ 
that stores current $s$-clique counts, and updates incident $s$-clique counts 
affected by peeling the $r$-cliques in $A$. It also returns the set of $r$-cliques
that have not yet been peeled with their decremented $s$-clique counts. 
\algname{Update} first initializes a parallel hash table $U$ to store the set
of $r$-cliques with changed $s$-clique counts (Line~\ref{line:initialize-u}).\footnote{Note that it is inefficient to initialize $U$ here in the \algname{Update} in every round, although we do so in the pseudocode for simplicity. Instead, $U$ should be initialized following Line \ref{line:set-finished-zero}, with size equal to the total number of $r$-cliques. Then, after Line \ref{line:updatebuckets}, $U$ can be efficiently cleared for the next round by rerunning the \algname{Update} subroutine solely to clear previously modified entries in $U$.}
It then considers each $r$-clique $R \in A$ being peeled, and computes the 
intersection of the undirected neighbors of the vertices in $R$, which are stored
in set $I$ (Line~\ref{line:intersect-undirected-r}). The vertices in $I$ are 
candidate vertices to form $s$-cliques from $R$, and \algname{Update} uses $I$ 
with the subroutine \ourcountrec{} to find the remaining $s-r$ vertices to complete
the $s$-cliques incident to $R$ (Line~\ref{line:peel-count-cliques}). {\color{change}
In 
the second round of Figure \ref{fig:walkthrough-nd-example}, $abf$, $aef$, and $bef$ are being peeled. The intersection of the neighbors of $a$, $b$, and $f$ adds vertex $e$ to the discovered 4-clique.
We thus find only one 4-clique incident to $abf$, and similarly, we find the same 4-clique incident to $aef$ and $bef$. }

For each $s$-clique $S$ found, \ourcountrec{} calls \algname{update-func} (Lines~\ref{line:update-f-begin}--\ref{line:update-f-end}), which first 
checks if $S$ contains $r$-cliques that were previously peeled 
(Line~\ref{line:previously-peeled-check}). If not, \algname{update-func} 
atomically subtracts $1/a$ from each $s$-clique count for each $r$-clique in $S$, 
where $a$ is the number of size $r$ subsets in $S$ that are also being 
peeled (Line~\ref{line:update-a-count}). This is to prevent 
over-counting---if $r$-cliques that participate in the same $s$-clique 
are peeled simultaneously, then they will each subtract $1/a$ from the 
$s$-clique count, and the total subtraction will sum to $1$ for this $s$-clique. 
\algname{update-func} also adds each $r$-clique with updated counts to $U$ 
(Line~\ref{line:update-u}). {\color{change} In the second round of Figure \ref{fig:walkthrough-nd-example},
since $abf$, $aef$, and $bef$ each discover the 4-clique $abef$, which consists of $a = 3$ triangles that are simultaneously being peeled, $abf$, $aef$, and $bef$ each decrement $1/3$ from the 4-clique count of $abe$ in $T$.
 In
total, $1$ is decremented from $abe$'s 4-clique count. Then, $abe$ is
added to $U$, since its 4-clique count has been updated. 
In the third round, 
because all remaining triangles are being peeled in $A$, the 
set $U'$ defined on  Line~\ref{line:update-f-begin}, is either empty 
or contains a previously peeled triangle, by definition. Thus, 
\algname{Update} returns without performing any modifications 
to $U$ or $T$, and no buckets are updated. Afterwards,
the $\mathsf{finished}$ variable equals the total number of triangles, 
and the algorithm finishes.
}

We now discuss the theoretical efficiency our parallel nucleus decomposition algorithm.
To show that our algorithm improves upon the best existing
work bounds for the sequential nucleus decomposition algorithm, we use the following key lemma that upper bounds the sum of the minimum degrees of vertices over all
$c$-cliques.  We note that Chiba and Nishizeki~\cite{ChNi85} 
first proved this lemma for $c = 2$, and Eden \textit{et al.} first proved this lemma for larger $c$~\cite{EdRoSe20}. We provide a similar proof here.

\begin{lemma}[{{\cite{EdRoSe20}}}]\label{lem-malpha-bound}
For a graph $G$ with arboricity $\alpha$, over all $c$-cliques $C_c = \allowbreak \{v_1, \ldots, v_c\}$ in
$G$ where $c\geq 1$, $\sum_{C_c} \min_{1 \leq i \leq c}\text{deg}(v_i) = O(m \alpha^{c-1})$.
\end{lemma}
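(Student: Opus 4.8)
The plan is to charge the minimum-degree contribution of each $c$-clique to a single canonical vertex of the clique, selected via a low-out-degree acyclic orientation of $G$, and then regroup the sum vertex-by-vertex. First I would fix an acyclic orientation of $G$ in which every vertex has out-degree $O(\alpha)$. Such an orientation is exactly an $O(\alpha)$-orientation in the sense defined earlier in the excerpt: take any total ordering of the vertices (for instance a degeneracy ordering, whose degeneracy is $O(\alpha)$) and direct each edge from its earlier endpoint to its later endpoint. Write $N^+(u)$ for the set of out-neighbors of $u$, so that $|N^+(u)| = O(\alpha)$ for every $u \in V$.

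The key structural observation is that in an acyclic orientation every $c$-clique $C$ has a unique \emph{source} vertex $u$, namely the vertex of $C$ that is earliest in the ordering, and this $u$ has directed edges to all of the other $c-1$ vertices of $C$. Hence $C \setminus \{u\}$ is a $(c-1)$-clique contained in $N^+(u)$, and conversely $u$ together with any $(c-1)$-clique inside $N^+(u)$ forms a $c$-clique with source $u$. This gives an exact correspondence between $c$-cliques with source $u$ and $(c-1)$-cliques lying in $N^+(u)$, so the number of $c$-cliques with source $u$ is at most $\binom{|N^+(u)|}{c-1} = \binom{O(\alpha)}{c-1} = O(\alpha^{c-1})$, treating $c$ as a constant (any such clique is just a size-$(c-1)$ subset of the $O(\alpha)$ out-neighbors of $u$).

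To conclude, I would use the trivial bound $\min_{1 \le i \le c}\text{deg}(v_i) \le \text{deg}(u)$, which holds simply because the source $u$ is one of the clique's own vertices. Grouping the sum over all $c$-cliques according to their source vertex then yields
\[
\sum_{C_c} \min_{1 \le i \le c} \text{deg}(v_i) \le \sum_{u \in V} \text{deg}(u)\cdot\bigl|\{(c-1)\text{-cliques in } N^+(u)\}\bigr| = O\!\Bigl(\alpha^{c-1}\!\sum_{u \in V} \text{deg}(u)\Bigr) = O(m\alpha^{c-1}),
\]
using $\sum_{u \in V}\text{deg}(u) = 2m$. As sanity checks, $c = 1$ reduces to $\sum_v \text{deg}(v) = 2m = O(m)$, and $c = 2$ recovers the Chiba--Nishizeki bound $\sum_{(u,v) \in E}\min(\text{deg}(u),\text{deg}(v)) = O(m\alpha)$.

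The step that must be handled carefully, and essentially the only place the argument could break, is the acyclicity of the orientation: under an arbitrary $O(\alpha)$ out-degree orientation a clique becomes a tournament that need not have a vertex dominating all others, so the per-source charging would fail; using a total-order-based orientation (which the paper's orientation primitives already provide) guarantees a well-defined unique source and fixes this. I would also note that bounding the clique minimum by the source degree appears wasteful but is not: the factor $\alpha^{c-1}$ comes entirely from the bounded out-degree, while $\sum_u \text{deg}(u)$ contributes only the factor $2m$, so no further use of the minimum is needed to obtain the claimed bound.
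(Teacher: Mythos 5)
Your proof is correct, but it charges the sum differently from the paper. The paper assigns each $c$-clique to the \emph{edge} formed by its first two vertices under the $O(\alpha)$-orientation, bounds the number of cliques per edge by $O(\alpha^{c-2})$ (the remaining $c-2$ vertices lie in the $O(\alpha)$-sized common out-neighborhood of the two endpoints), and then invokes Chiba--Nishizeki's Lemma~2, $\sum_{(u,v)\in E}\min(\text{deg}(u),\text{deg}(v)) = O(m\alpha)$, as a black box to finish. You instead assign each clique to its unique \emph{source vertex} $u$, bound the number of cliques per source by $O(\alpha^{c-1})$, and close with only the handshake identity $\sum_u \text{deg}(u) = 2m$. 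The bookkeeping balances out identically --- you pay one extra factor of $\alpha$ in the per-anchor clique count but save it by replacing the $O(m\alpha)$ edge-min sum with the $O(m)$ degree sum --- so both arrive at $O(m\alpha^{c-1})$. Your route is slightly more self-contained: it needs no external lemma beyond the $O(\alpha)$ out-degree bound, and the $c=2$ case of your argument is itself a proof of Chiba--Nishizeki's Lemma~2 rather than an appeal to it; it also handles $c=1$ and $c=2$ uniformly, whereas the paper treats them as separate base cases. The paper's edge-based charging, on the other hand, is the one that mirrors how the \algname{Update} subroutine actually does its work (intersecting neighborhoods starting from an $r$-clique and extending through directed out-neighborhoods), which is why that decomposition reappears in the complexity analysis of Theorem~\ref{thm:peelexact}. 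You were right to flag acyclicity as the one load-bearing assumption: since the paper's $O(\alpha)$-orientation is defined via a total order on the vertices, every clique induces a transitive tournament and the source is well defined, so your argument goes through as written.
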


\begin{proof}
First, the $c=1$ case is easy, since $\sum_{v_i \in V} \text{deg}(v_i) = 2m$, and 
the $c=2$ case is proven by Chiba and
Nishizeki~\cite{ChNi85}. We consider $c \geq 3$ for the rest of the proof.

We begin by rewriting the sum $\sum_{C_c} \min_{1 \leq i \leq c}\text{deg}(v_i)$ to
be taken over all edges $e \in E$ instead of over all $c$-cliques $C_c$. Importantly,
we must assign each unique $c$-clique $C_c$ to a unique edge $e$ contained within 
$C_c$, to avoid double counting. We perform this assignment using 
an $O(\alpha)$-orientation of the graph, where each $c$-clique $C_c$ is an ordered 
list of vertices $v_i$ for $1 \leq i \leq c$, with the ordering given by the 
$O(\alpha)$-orientation. We assign each $C_c$ to the edge $e$ given by its first
two vertices $v_1$ and $v_2$.

Then, rewriting the sum $\sum_{C_c} \min_{1 \leq i \leq c}\text{deg}(v_i)$ to be 
taken over all edges, we obtain 
$\sum_{e \in E} \sum_{C_c \text{ assigned to }e} \min_{1 \leq i \leq c}\text{deg}(v_i)$.
For each edge $e$, the quantity inside the inner sum is upper bounded by the minimum of the degrees of the
endpoints of $e$, so we have 
$\sum_{e = (u, v) \in E} \min(\text{deg}(u), \allowbreak\text{deg}(v)) \cdot ($\# of $C_c$ assigned to $e)$.

We now claim that the number of $c$-cliques $C_c$ assigned to each edge 
$e = (u, v)$ is upper bounded by $O(\alpha^{c-2})$. This fact follows by virtue of 
the $O(\alpha)$-orientation used earlier to assign $c$-cliques $C_c$ to edges. Namely,
if we let $DG$ denote the directed graph given by the $O(\alpha)$-orientation, then
by construction for each $C_c$, vertices $v_i \in C_c$ for $i > 2$ must be 
out-neighbors of $v_1$ and $v_2$. There are at most $O(\alpha)$ out-neighbors in
the intersection of $N_{DG}(u)$ and $N_{DG}(v)$ that could potentially complete a directed
$c$-clique with the vertices $u$ and $v$. With $c-2$ additional vertices necessary to
complete a directed $c$-clique, and $O(\alpha)$ vertices to choose from, there are at most $O(\alpha^{c-2})$ such directed $c$-cliques.

Thus, our desired sum is given by $\sum_{e = (u, v) \in E} \min(\text{deg}(u), \allowbreak\text{deg}(v)) \cdot ($\# of $C_c$ assigned to $e) = \allowbreak O(\alpha^{c-2} \sum_{e = (u,v) \in E} \min(\text{deg}(u), \allowbreak\text{deg}(v)))$. By Lemma 2 of~\cite{ChNi85}, we know that $ \sum_{e=(u,v) \in E}
\min(\text{deg}(u), \text{deg}(v)) =\allowbreak O(m\alpha)$. Therefore, in total, we have $\sum_{C_c} \min_{1 \leq i \leq c}\text{deg}(v_i) = \allowbreak O(m \alpha^{c-1})$, as desired.
\end{proof}

Using this key lemma, we prove the following complexity
bounds for our parallel nucleus decomposition algorithm. 
$\rho_{(r,s)}(G)$ is defined to be the \defn{$\boldsymbol{(r,s)}$ peeling complexity} of
$G$, or the number of rounds needed to peel the graph where in each
round, all $r$-cliques with the minimum $s$-clique count are peeled.  $\rho_{(r,s)}(G) \leq O(m\alpha^{r-2})$, since at least
one $r$-clique is peeled in each round.

\begin{theorem}\label{thm:peelexact}
  \ournd{} computes the $(r,s)$ nucleus decomposition in $\BigO{m\alpha^{s-2} +
  \rho_{(r,s)}(G)\log n}$ amortized expected work and $\BigO{\rho_{(r,s)}(G) \log n + \log^2 n}$ span \whp{}, where
  $\rho_{(r,s)}(G)$ is the $(r,s)$ peeling complexity of $G$.
\end{theorem}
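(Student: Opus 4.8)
The plan is to charge the total work to three phases---the initial $s$-clique counting, the initialization of the bucketing structure, and the $\rho_{(r,s)}(G)$ peeling rounds---and to show each fits the claimed bound, with Lemma~\ref{lem-malpha-bound} doing the heavy lifting for the rounds. First I would observe that orienting $G$ and running \ourcountrec{} with \algname{count-func} to populate $T$ lists every $s$-clique once in $\BigO{m\alpha^{s-2}}$ work and $\BigO{\log n}$ span \whp{}; since each discovered $s$-clique triggers only $\binom{s}{r} = \BigO{1}$ atomic increments, the counting phase stays within $\BigO{m\alpha^{s-2}}$ work. Inserting all $r$-cliques into $B$ via the batch-parallel Fibonacci heap costs $\BigO{m\alpha^{r-2}} = \BigO{m\alpha^{s-2}}$ work, since $r \le s-1$.

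The crux is bounding the total work of the calls to \algname{Update}, noting that each $r$-clique is peeled exactly once over the whole execution. For a peeled $r$-clique $R$, forming the candidate set $I$ on Line~\ref{line:intersect-undirected-r} costs $\BigO{\min_{v \in R} \text{deg}(v)}$, so summing over all $r$-cliques and applying Lemma~\ref{lem-malpha-bound} with $c = r$ gives $\BigO{m\alpha^{r-1}} = \BigO{m\alpha^{s-2}}$. For the recursive listing on Line~\ref{line:peel-count-cliques}, I would argue that once the first of the remaining $s-r$ vertices is fixed by an out-neighbor intersection, the candidate set shrinks to size $\BigO{\alpha}$ (using the $\BigO{\alpha}$-orientation), so extending $R$ to all incident $s$-cliques costs $\BigO{|I| \cdot \alpha^{s-r-1}}$ when $s-r \ge 2$, and merely $\BigO{|I|}$ when $s = r+1$ (where \ourcountrec{} applies $f$ directly at recursive level $1$ with no intersection). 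Since $|I| \le \min_{v\in R}\text{deg}(v)$, summing over all $r$-cliques and invoking Lemma~\ref{lem-malpha-bound} with $c = r$ again yields $\BigO{\alpha^{s-r-1} \cdot m\alpha^{r-1}} = \BigO{m\alpha^{s-2}}$ in both cases. The $\BigO{1}$ work per discovered $s$-clique inside \algname{update-func}, and the total number of bucket-update operations---bounded by $\BigO{m\alpha^{s-2}}$ decrement events, because the ``previously peeled'' guard on Line~\ref{line:previously-peeled-check} ensures each $s$-clique contributes $\BigO{1}$ decrements overall---are both subsumed. Extracting the minimum bucket once per round then adds $\BigO{\rho_{(r,s)}(G)\log n}$, giving the claimed work.

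For the span, orientation costs $\BigO{\log^2 n}$ and initial counting $\BigO{\log n}$ \whp{}. Each round performs an extract-min ($\BigO{\log n}$), the \algname{Update} subroutine whose parallel loop over $R \in A$ has span $\BigO{\log n}$ \whp{} (the intersection plus the $\BigO{(s-r)\log n}$ span of \ourcountrec{}, with $s$ constant), and a batch of bucket updates; over $\rho_{(r,s)}(G)$ rounds these sum to $\BigO{\rho_{(r,s)}(G)\log n}$, for a total of $\BigO{\rho_{(r,s)}(G)\log n + \log^2 n}$. Correctness rests on the invariant that $T[R]$ always equals the number of $s$-cliques on $R$ among unpeeled $r$-cliques: when a batch $A$ is peeled, the $1/a$ decrements summed across the $a$ subsets of a destroyed $s$-clique in $A$ total exactly $1$, while the guard prevents re-decrementing an $s$-clique already destroyed in an earlier round.

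The main obstacle is the work bound for \algname{Update}: a naive accounting of the recursive listing would overcount, and the argument closes only because the $\BigO{\alpha}$-orientation caps each post-first-extension candidate set at $\BigO{\alpha}$ and because Lemma~\ref{lem-malpha-bound} (rather than a crude per-clique degree bound) controls $\sum_R \min_{v\in R}\text{deg}(v)$; the boundary case $s = r+1$ must be separated so the spurious $\alpha^{s-r-1}$ factor does not inflate the bound. A secondary delicate point is the span: to keep the per-round cost at $\BigO{\log n}$ I rely on the per-round bucket-update batch running in $\BigO{\log n}$ span \whp{}, and I would need to confirm this against the bucketing structure used (a Fibonacci-heap update span of $\BigO{\log^2 n}$ per round would otherwise degrade the total to $\BigO{\rho_{(r,s)}(G)\log^2 n}$).
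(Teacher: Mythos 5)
Your proposal is correct and follows essentially the same route as the paper's proof: the same three-phase decomposition (counting, bucketing, peeling rounds), the same use of Lemma~\ref{lem-malpha-bound} to bound $\sum_R \min_{v\in R}\text{deg}(v)$ by $\BigO{m\alpha^{r-1}}$, and the same multiplicative $\BigO{\alpha^{s-r-1}}$ accounting for the recursion in \algname{Update}. The only substantive difference is that you are somewhat more careful than the paper in a few spots (the $s=r+1$ boundary case, the correctness invariant for the $1/a$ decrements, and the per-round bucket-update span, which the paper's proof also does not explicitly reconcile with the stated $\BigO{\log^2 n}$ span of batch updates in its Fibonacci-heap bucketing structure).
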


\begin{proof}
First, the work and span of counting the number of $s$-cliques per $r$-clique is given 
directly by Shi \emph{et al.}~\cite{shi2020parallel}, since we use this $s$-clique enumeration algorithm, \ourcountrec{},
as a subroutine. Because we hash each $s$-clique count per $r$-clique in parallel, this takes 
$O(m\alpha^{s-2})$ work and $O(\log^2 n)$ span \whp{} for a constant $s$.

We now discuss the work and span of obtaining the set of $r$-cliques with minimum $s$-clique count and updating the $s$-clique counts in our bucketing structure $B$. We make use of the fact that the total number of $c$-cliques in $G$ is bounded by $O(m \alpha^{c-2})$, which follows from the $c$-clique 
enumeration algorithm~\cite{shi2020parallel}.
The overall work of inserting $r$-cliques into $B$ is given by the number of $r$-cliques 
in $G$, or $O(m \alpha^{r-2})$. Each $r$-clique has its bucket decremented at most once
per incident $s$-clique, and because there are at most $O(m \alpha^{s-2})$ $s$-cliques,
the work of updating buckets is given by $O(m \alpha^{s-2})$. Finally, extracting the
minimum bucket can be done in $O( \log n)$ amortized expected work and $O( \log n)$ span
\whp{}, which in total gives  $O(\rho_{(r,s)}(G) \log n)$ amortized expected work, and $O(\rho_{(r,s)}(G) \log n)$ span \whp{}

Finally, it remains to discuss the work and span of obtaining updated $s$-clique counts
after peeling each set of $r$-cliques, or the work and span of the \algname{Update} subroutine. 
For each $r$-clique $R$, \algname{Update} first computes the intersection of the 
neighbors of each vertex $v \in R$, and stores them in set $I$. Notably, the total work
of intersecting the neighbors of each vertex $v \in R$ over all $r$-cliques $R$ is 
given by $O(\sum_{R} \min_{1 \leq i \leq r}\text{deg}(v_i)) = O(m \alpha^{r-1})$ \whp{}, which 
follows from Lemma \ref{lem-malpha-bound}, and the span across all intersections is $O(\log n)$ \whp{} for a constant $r$.

\algname{Update} then calls the \ourcountrec{} subroutine to complete $s$-cliques from $R$, 
taking as input the sets $I$ computed in the previous step.
Each successive recursive call to \ourcountrec{} takes a multiplicative 
$O(\alpha)$ work \whp{} due to the intersection operation, and
\ourcountrec{} requires $s-r$ recursive levels to complete each $s$-clique. This 
results in a total multiplicative factor of $O(\alpha^{s-r-1})$ work \whp{}, 
because the final recursive level (as shown in the {\color{change} $r\ell = 1$} case in \ourcountrec{})
does not involve any intersection operations, and simply iterates through the discovered
$s$-cliques. Thus, including the initial cost of setting up the call to \ourcountrec{}, 
and using the fact that the number of potential neighbors in the sets $I$ across all
$r$-cliques is $O(m\alpha^{r-1})$, the total work of \ourcountrec{} across all $r$-cliques
is given by $O(m\alpha^{r-1} \cdot \alpha^{s-r-1}) =\allowbreak O(m \alpha^{s-2})$ 
\whp{}
The span of each intersection on each recursive level is $O(\log n)$ \whp{}, and there
are $\rho_{(r,s)}(G)$ rounds by definition, which contributes $O(\rho_{(r,s)}(G)\log n)$ overall.
\end{proof}

{\color{change} \myparagraph{Discussion} \ournd{} is work-efficient with respect to the best sequential algorithm, and improves upon the best sequential algorithm that 
uses sublinear space in the number of $s$-cliques.
In more detail, the previous best sequential bounds were given by \sariyuce{} \textit{et al.}~\cite{Sariyuce2017}, in terms of the number of $c$-cliques containing each vertex $v$, or $ct_c(v)$, and the work of an arbitrary $c$-clique enumeration algorithm, or $RT_c$.
Assuming space proportional to the number of $s$-cliques and the number of $r$-cliques in the graph, they compute the $(r,s)$ nucleus decomposition in $\BigO{RT_r + RT_s} = \BigO{RT_s}$ work, and assuming space proportional to only the number of $r$-cliques in the graph, they compute the $(r,s)$ nucleus decomposition in $\BigO{RT_r + \sum_{v \in V(G)} ct_r(v) \cdot \text{deg}(v)^{s-r}}$ work.
We discuss these bounds in detail and provide a comparison to \ournd{} in the \intheapp{}.
}
Specifically, \ournd{} uses space proportional to the number of $r$-cliques
due to the space required for $T$ and $B$, and is work-efficient with respect to the 
best sequential algorithm that uses the same space, {\color{change} improving upon the corresponding bound given by \sariyuce{} \textit{et al.}~\cite{Sariyuce2017}}.
If more space is permitted, a small modification of our algorithm yields a sequential nucleus
decomposition algorithm that performs $O(m\alpha^{s-2})$ work, {\color{change} matching the corresponding bound given by \sariyuce{} \textit{et al.}~\cite{Sariyuce2017}}. The idea is to use a dense bucketing structure with space
proportional to the total number of $s$-cliques, since finding the next bucket with 
the minimum $s$-clique count involves a simple linear search rather than a heap operation. 
In the parallel setting, \ournd{} can work-efficiently find the minimum 
non-empty bucket by performing a series of steps, where at each step $i$, 
\ournd{} searches in parallel the region $[2^i, 2^{i+1}]$ for the next non-empty
bucket. This search procedure takes logarithmic span, and is work-efficient with respect to
the sequential algorithm.

\section{Practical Optimizations}
We now introduce the practical optimizations that we use for our parallel $(r,s)$ nucleus decomposition implementation.

\subsection{Number of Parallel Hash Table Levels}\label{sec:multilevel-opt}

\ournd{} uses a single parallel hash table $T$ to store the $s$-clique counts, where the keys are $r$-cliques. 
However, this storage method is infeasible in practice due to space limitations, particularly for large $r$, since $r$ vertices must be concatenated into a key for each $r$-clique. 
We observe that space can be saved by introducing more levels to our parallel hash table $T$.
For example, one option is to instead use a two-level combination of an array and a parallel hash table, 
which consists of an array of size $n$ whose elements are pointers to individual hash tables 
where the keys are $(r-1)$-cliques. The $s$-clique count for a corresponding $r$-clique 
$R = \{ v_1, \ldots, v_r\}$ (where the vertices are in sorted order) is stored by indexing into 
the $v_1^\text{th}$ element of the array, and storing the count on the key corresponding to the 
$(r-1)$-clique given by $\{ v_2, \ldots, v_r\}$ in the given hash table. Space 
savings arise because $v_1$ does not have to be repeatedly stored for each 
$(r-1)$-clique in its corresponding hash table, but rather is only stored once 
in the array.

A more general option, particularly for large $r$, is to use a multi-level parallel hash table, 
with $\ell \leq r$ levels in which each intermediate level consists of a parallel hash table 
keyed by a single vertex in the $r$-clique whose value is a pointer to a parallel hash table in the subsequent level, and the last level consists of a parallel hash table
keyed by $(r-\ell+1)$-cliques. 
Given an $r$-clique $R = \{ v_1, \ldots, v_r\}$ (where the 
vertices are in sorted order), each of the vertices in the clique in order is mapped to each level of the hash table, except for the last $(r-\ell + 1)$ vertices which are concatenated into a key for the last level of the hash table. Thus, the location in the hash table corresponding to $R$ can be found by looking up the key $v_j$ in the hash table at each level for $j < \ell$, and following the pointer to the next level. At the last level, the key is given by the $(r-\ell + 1)$-clique corresponding to $\{ v_{\ell}, \ldots, v_{r}\}$.
Again, space savings arise due to the shared vertices on each intermediate level, which 
need not be repeatedly stored in the keys on the subsequent levels of the parallel hash table,
and for $r \geq \ell > 2$, these savings may exceed those found in the previous combination
of an array and a parallel hash table.

In considering different numbers of levels, we differentiate between the 
\emph{two-level} combination of an array and a parallel hash table, and the  \emph{$\ell$-multi-level} option of nested parallel hash tables, where $\ell$ is the 
number of levels, and notably, we may have $\ell = 2$. {\color{change} Figures \ref{fig:levels-example} and \ref{fig:levels-example-2} show examples of $T$ using different numbers of levels. As shown in these examples, there are cases where increasing the number of levels does not save space, because the additional pointers required in increasing the number of levels exceeds the overlap in vertices between $r$-cliques, particularly for small $r$. 

\begin{figure}[t]
\centering
    \includegraphics[width=\columnwidth]{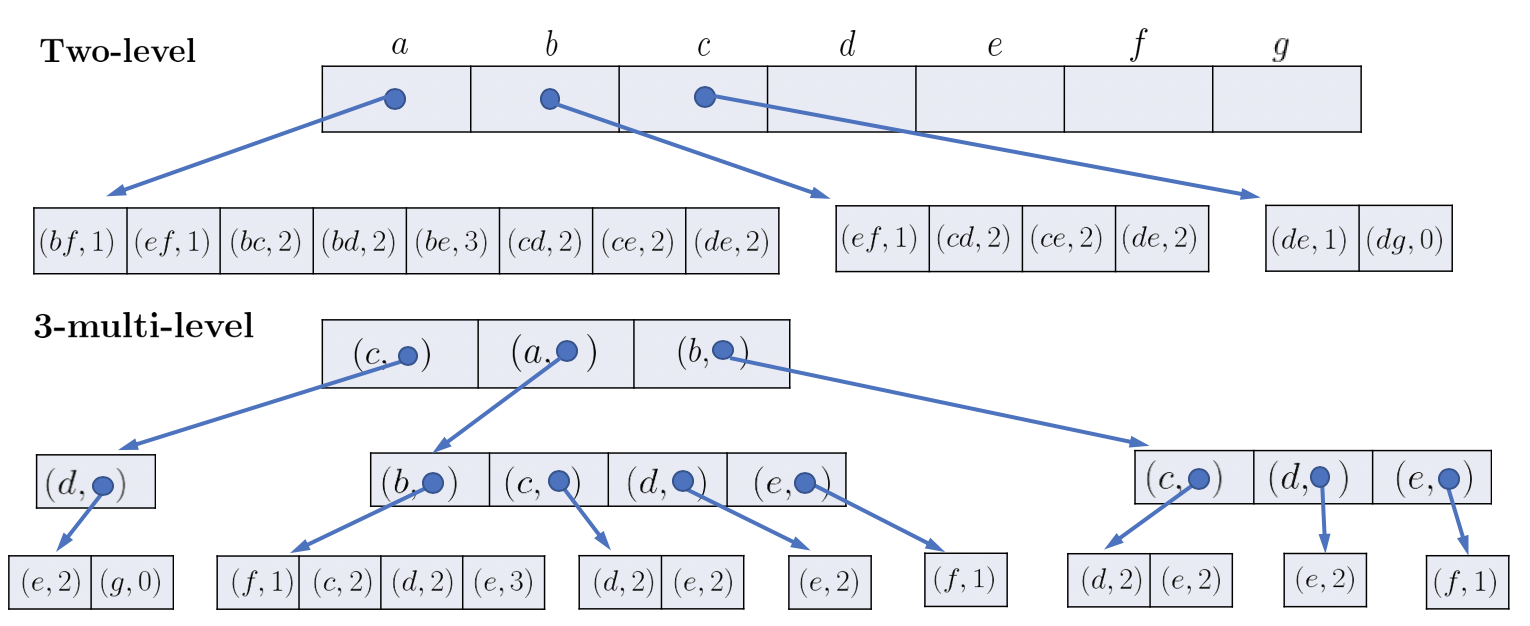}
   \caption{\color{change} An example of the initial parallel hash table $T$ for $(3, 4)$ nucleus decomposition on the graph from Figure \ref{fig:nd-example}, considering different numbers of levels. Note that Figure \ref{fig:walkthrough-nd-example} shows a one-level parallel hash table $T$. If we consider each vertex and each pointer to take a unit of memory, the one-level $T$ takes 42 units, while the two-level $T$ takes 35 units, thus saving memory. However, the 3-multi-level $T$ takes 50 units, because $r = 3$ is too small to give memory savings for this graph.
   }
    \label{fig:levels-example}
\end{figure}

\begin{figure}[t]
\centering
    \includegraphics[width=0.7\columnwidth]{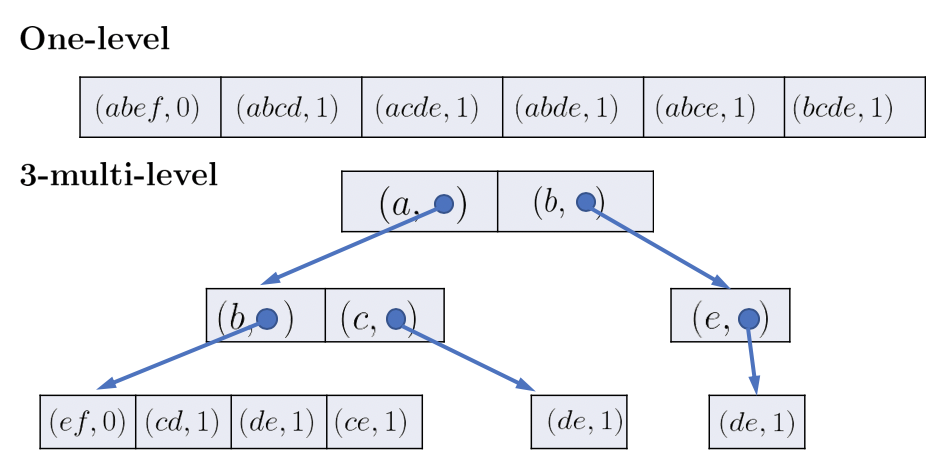}
   \caption{\color{change} An example of the initial parallel hash table $T$ for $(4, 5)$ nucleus decomposition on the graph from Figure \ref{fig:nd-example}, considering different numbers of levels. If we consider each vertex and each pointer to take a unit of memory, the one-level $T$ takes 24 units, while the 3-multi-level $T$ takes 22 units, thus saving memory. We see memory savings with more levels in $T$ compared to in Figure \ref{fig:levels-example}, because $r = 4$ is sufficiently large.
   }
    \label{fig:levels-example-2}
\end{figure}

\begin{figure}[t]
\centering
   \centering
    \includegraphics[width=\columnwidth]{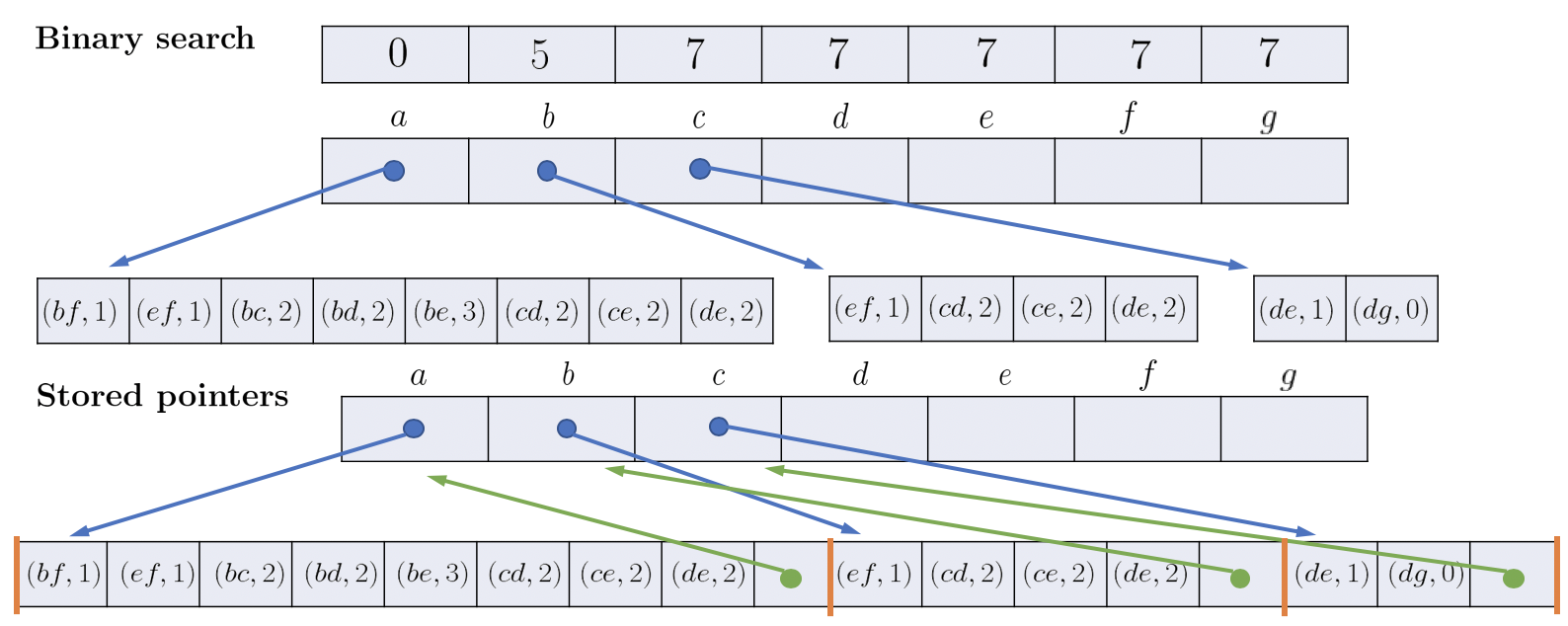}
   \caption{\color{change} Using the same graph as shown in Figure \ref{fig:nd-example} and the two-level $T$ from Figure \ref{fig:levels-example}, for $(3, 4)$ nucleus decomposition, an example of the binary search and stored pointers methods.
   }
    \label{fig:inv-map-example}
\end{figure}

\begin{figure}[t]
\centering
   \centering
    \includegraphics[width=0.6\columnwidth]{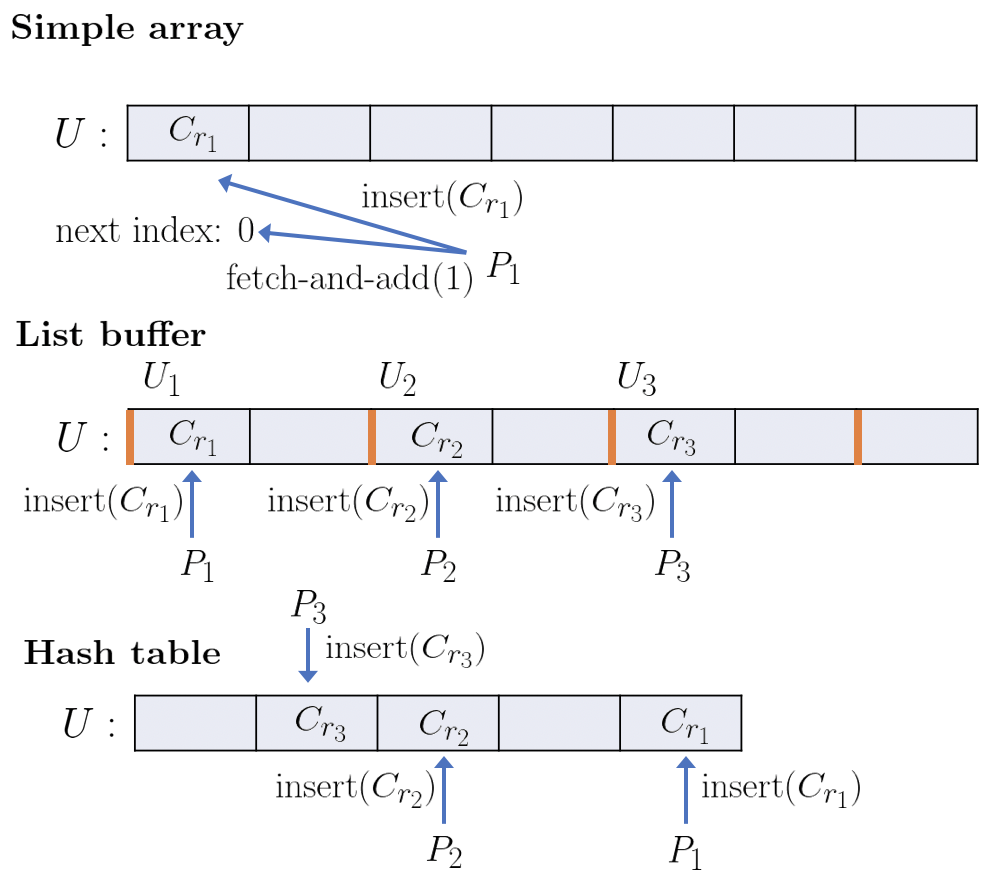}
   \caption{\color{change} An example of the simple array, list buffer, and hash table options in aggregating the set $U$ of $r$-cliques with updated $s$-clique counts. Processors $P_1$, $P_2$, and $P_3$ are storing $r$-cliques $C_{r_1}$, $C_{r_2}$, and $C_{r_3}$, respectively, in $U$.
   }
    \label{fig:agg-example}
\end{figure}

Moreover, note that there are no theoretical guarantees on whether additional levels will result in a memory reduction, since a memory reduction depends on the number of overlapping $r$-cliques and the size of the overlap in $r$-cliques in any given graph. In other words, the skew in the constituent vertices of the $r$-cliques directly impacts the possible memory reduction.
Similarly, performance improvements may arise due to better cache locality in accessing $r$-cliques that share many vertices, but this is not guaranteed, and cache misses are inevitable while traversing through different levels.

The idea of a multi-level parallel hash table is more generally applicable in scenarios where the efficient storage and access of sets with significant overlap is desired, particularly if memory usage is a bottleneck. An example use case is to efficiently store the hyperedge adjacency lists for a hypergraph. The multi-level parallel hash table can also be used for data with multiple fields or dimensions, where each level keys a different field.
}

\subsection{Contiguous Space}\label{sec:contiguous}
In using the two-level and multilevel data structures as $T$, 
a natural way to implement the last level parallel hash tables in these more complicated data 
structures is to simply allocate separate blocks of memory as needed for each last level parallel
hash table. However, this approach may be memory-inefficient and lead to poor cache locality.
An optimization for the two-level and multi-level tables is to instead first compute the space needed for all last level parallel hash 
tables and then allocate a contiguous block of memory such that the last level tables are 
placed consecutively with one another, using a parallel prefix sum to determine each of their 
indices into the contiguous block of memory. This optimization requires little overhead, and has the additional benefit of greater cache locality.
This optimization does not apply to one-level parallel hash tables, since they are by nature contiguous.

\subsection{Binary Search vs.\ Stored Pointers}\label{sec:pointers}

An important implementation detail from \ournd{} is the interfacing between the representation 
of $r$-cliques in the bucketing structure $B$ and the representation of $r$-cliques in the parallel
hash table $T$. It is impractical to use the theoretically-efficient Fibonacci heap used to
obtain our theoretical bounds, due to the complexity of Fibonacci heaps in general; in practice, 
we use the efficient parallel bucketing implementation by Dhulipala et al.~\cite{DhBlSh17}.

This bucketing structure requires each $r$-clique to be represented by an index, and to interface
between $T$ and $B$, we require a map translating each $r$-clique in $T$ to its index in $B$, and 
vice versa. {\color{change} More explicitly, for a given $r$-clique in $T$, we must be able to find the number of $s$-cliques it participates in using $B$, and symmetrically, for a given $r$-clique that we peel from $B$, we must be able to find its constituent vertices using $T$. It would be impractical in terms of space to represent the $r$-clique directly using its constituent vertices in $B$, since its constituent vertices are already stored in $T$. Thus, we seek a unique index to represent each $r$-clique by in $B$, that is easy to map to and from the $r$-clique in $T$. } If $T$ is represented using a one-level parallel hash table, then a natural index to 
use for each $r$-clique is its key in $T$, and the map is implicitly the identity map. {\color{change} For instance, considering the one-level $T$ in Figure \ref{fig:walkthrough-nd-example}, instead of storing $cdg$ in bucket 0 before any rounds begin, we would instead store 13, which is the index of $cdg$ in $T$.} However, 
if $T$ is represented using a two-level or multi-level structure, then the index representation 
and map are less natural, and require additional overheads to maintain.

One important property of the mapping from $T$ to $B$ is space-efficiency, and so it is
desirable to maintain an implicit map. Therefore, we represent each $r$-clique by the unique index 
corresponding to its position in the last level parallel hash table in $T$, which can be implicitly 
obtained by its location in memory if $T$ is represented contiguously. 
If $T$ is not represented 
contiguously, then we can store the prefix sums of the sizes of each successive level of 
parallel hash tables, and use these plus the $r$-clique's index at its last level table in $T$ to 
obtain the unique index.
{\color{change} For equivalent $T$, the index corresponding to each $r$-clique is the same regardless of whether $T$ is represented contiguously in memory or not. For instance, considering the two-level $T$ in Figure \ref{fig:levels-example}, instead of storing $abc$ in bucket 2 before any rounds begin, we would instead store 2, which is the index of $abc$ in the second level of $T$. Similarly, instead of storing $bef$ in bucket 1, we would instead store 8; this is because we take the index corresponding to $bef$ to be its index in the second level hash table corresponding to vertex $b$, plus the sizes of earlier second level hash tables, namely the size of the hash table corresponding to vertex $a$. }
Then, it is also necessary to implicitly maintain the inverse map, from an index 
in $B$ to the constituent vertices of the corresponding $r$-clique, for which we provide two methods.

\myparagraph{Binary Search Method}
One solution is to store the prefix sums of the sizes of each successive level of parallel hash tables in 
both the contiguous and non-contiguous cases, and use a serial binary search to find the table 
corresponding to the given index at each level. We also store the vertex corresponding to each 
intermediate level alongside each table, which can be easily accessed after the binary 
search. The key at the last level table can then be translated to its constituent $r-\ell$ vertices. 
This does not require the last level parallel hash tables to be stored contiguously in memory. Figure \ref{fig:inv-map-example} shows an example of this method in the non-contiguous case, where the top array is the prefix sum of the sizes of the second level hash tables. Each entry in the prefix sum corresponds to an entry in the first level, which contains the first vertex of the triangle and points to the second level, which contains the other two vertices.

\myparagraph{Stored Pointer Method}
However, while using binary searches is a natural solution, they may be computationally inefficient, both in terms of work and cache misses,
considering the number of such translations that are needed throughout \ournd{}. Instead, we
consider a more sophisticated solution which, assuming contiguous memory, is to place barriers
between parallel hash tables on each level that contain up-pointers to prior levels, and fill 
empty cells of the parallel hash tables with the same up-pointers. Then, given an index to the
last level, which translates directly into the memory location of the cell containing the last
level $r$-clique key due to the contiguous memory, we can perform linear searches
to the right until it reaches an empty cell, which directly gives an up-pointer to the prior level
(and thus, also the vertex corresponding to the prior level).
We differentiate between empty and non-empty cells by reserving the top bit of each key to indicate whether the cell is empty or non-empty. The benefit of this method is 
that with a good hashing scheme, the linear search for an empty cell will be faster and more 
cache-friendly compared to a full binary search.
Figure \ref{fig:inv-map-example} also shows an example of this method using contiguous space, in which the barriers placed to the right of each hash table in the second level point back to the parent entry in the first level, allowing an index to traverse up these pointers to obtain the corresponding first vertex of the triangle. Note that the bold orange lines on the second level mark the boundaries between parallel hash tables corresponding to different vertices from the first level.

With both the contiguous space and stored pointers optimizations applied to two-level and multi-level parallel hash tables $T$, 
we show in Section \ref{sec:eval-tuning} that we achieve up to a 1.32x speedup of using a two-level $T$, and up to a 1.46x speedup of using an $\ell$-multi-level $T$ for $\ell > 2$, over one level.

\subsection{Graph Relabeling}\label{sec:relabeling}
We sort the vertices in each $r$-clique prior to translating
it into a unique key for use in the parallel
hash table $T$. However, the lexicographic ordering of vertices in the 
input graph may not be representative of the access patterns used in
finding $r$-cliques. 
In particular, because we use directed edges based on an
$O(\alpha)$-orientation to count $r$-cliques and $s$-cliques, an 
optimization that could save work and improve cache locality in 
accessing $T$ is to relabel vertices based on the 
$O(\alpha)$-orientation, so that the sorted order is representative 
of the order in which vertices are discovered in the
\ourcountrec{} subroutine. 
The first benefit of this optimization is that there is no need to 
re-sort $r$-cliques based on the orientation, which is implicitly 
performed on Line \ref{line:sort-r-clique} of Algorithm \ref{alg:nd}, as after relabeling, the vertices in a clique will be added in increasing order. 
A second benefit is that $r$-cliques discovered in the same 
recursive hierarchy will be located closer together in our parallel 
hash table, potentially leading to better cache locality when accessing
their counts in $T$.
In our evaluation in Section \ref{sec:eval-tuning}, we see up to a 1.29x speedup using graph relabeling.

\subsection{Obtaining the Set of Updated $r$-cliques}\label{sec:setU}
A key component of the bucketing structure in \ournd{} is the computation of the set $U$ 
of $r$-cliques with updated $s$-clique counts, after peeling a set of $r$-cliques. Using a 
parallel hash table with size equal to the number of $r$-cliques is slow in practice due to 
the need to iterate through the entire hash table to retrieve the
$r$-cliques and to clear the hash table. We present here three options for 
computing $U$. {\color{change} Figure \ref{fig:agg-example} shows an example of each of these.}

\myparagraph{Simple Array}
The first option is to represent $U$ as a simple array to hold $r$-cliques, along with a variable
that maintains the next open slot in the array. We use a fetch-and-add to update 
the $s$-clique count of a discovered $r$-clique in the \algname{Update-Func}  subroutine, and if in 
the current round this is the first such modification of the $r$-clique's count, 
we use a fetch-and-add  to reserve a slot in $U$, and store the $r$-clique in 
the reserved slot in  $U$. This method introduces contention due to 
the requirement of all updated 
$r$-cliques to perform a fetch-and-add with a single variable to
reserve a slot in $U$. {\color{change} As shown in Figure \ref{fig:agg-example}, processor $P_1$ successfully updates the index variable and inserts its $r$-clique $C_{r_1}$ into the first index in $U$, but the other processors must wait until their fetch-and-add operations succeed.} However, the $r$-cliques are compactly 
stored in $U$ and there is no need to clear elements in $U$, 
which results in time savings.

\myparagraph{List Buffer} The second option improves upon the contention incurred by the first option, offering better 
performance. We use a data structure that we call a \emph{list buffer}, which consists of an 
array $U$ that holds $r$-cliques, and $P$ variables $\{i_1, \ldots, i_P\}$ that maintain the next 
open slots in the array, where $P$ is the number of threads. Each of the $P$ variables is 
exclusively assigned to one of the $P$ threads. The data structure also uses a constant buffer 
size, and each thread is initially assigned a contiguous block of $U$ of size equal to this 
constant buffer size. When a thread $j$ is the first to modify an $r$-clique's count in a given
round, the thread updates its corresponding $i_j$, and uses the reserved open slot in $U$ to 
store the $r$-clique. If a thread runs out of space in its assigned block in $U$, it uses a 
fetch-and-add operation to reserve the next available block in $U$ of size equal to the 
constant buffer size. We filter $U$ of all unused slots, 
prior 
to returning $U$ to the bucketing data structure. The reduced contention is due 
to the exclusive $i_j$ that each thread $j$ can update without contention. Threads may still 
contend on  reserving new blocks of space in $U$, but the contention is minimal with a large 
enough buffer size. In reusing the list buffer data structure in later
rounds, there is no need to clear $U$, as it is sufficient to reset the $i_j$.
{\color{change} In Figure \ref{fig:agg-example}, the buffer size is 2, and the slots in $U$ assigned to each processor $P_i$ are labeled by $U_i$. Each $P_i$ can store its corresponding $r$-clique $C_{r_i}$ in parallel, since there is no contention within their buffer as long as the buffer is not full.}

\myparagraph{Hash Table}
The last option reduces potential contention even further by removing the necessity to
reserve space, but at the cost of additional work needed to clear $U$ between rounds. 
This option uses a parallel hash table as $U$, but dynamically determines the amount of
space required on each round based on the number of $r$-cliques
peeled, and as such, the maximum possible number of $r$-cliques with updated $s$-clique 
counts. Thus, in rounds with fewer $r$-cliques peeled, less space is reserved for $U$ for
that round, and as a result, less work is required to clear the space to reuse in 
the next round. {\color{change} Figure \ref{fig:agg-example} shows each processor $P_i$ storing its corresponding $r$-clique $C_{r_i}$ into the parallel hash table. However, the entirety of $U$ must be cleared in order to reuse $U$ between rounds.}

In our evaluation in Section \ref{sec:eval-tuning}, we achieve up to a 3.98x speedup using a list buffer and up to a 4.12x speedup using a parallel hash table, both over a simple array.

\subsection{Graph Contraction} 
In the special case of $(2, 3)$ nucleus (truss) decomposition, we introduce an optimization that filters out peeled edges when a significant number of edges have been peeled over successive rounds. 
When many edges have been peeled, reducing the overall size of the graph and contracting adjacency lists may save work in future rounds, in that work does not need to be spent iterating over previously peeled edges, allowing for performance improvements.

We perform this contraction when the number of peeled edges since the previous contraction is at least twice the number of vertices in the graph, and we only contract adjacency lists of vertices that have
lost at least a quarter of their neighbors since the previous contraction. {\color{change} We chose these boundaries for contraction heuristically based on performance on real-world graphs; these generally}
allow for contraction to be performed only in instances in which it could meaningfully impact the 
performance of future operations. {\color{change} Specifically, we found that the overhead of each contraction operation is balanced by reduced work in future iterations only when vertices can significantly reduce the size of their adjacency lists, and iterating through vertices to check for this condition is only worthwhile when sufficiently many edges have been peeled.} Note that the $k$-truss decomposition implementation 
by Che \emph{et al.}~\cite{ChLaSuWaLu20} also periodically contracts the graph.
{\color{change} In terms of implementation details, the contraction operations are performed in parallel for each vertex. If a vertex meets the heuristic criteria, its adjacency list is filtered into a new adjacency list sans the peeled edges, using the parallel filter primitive; this new adjacency list replaces the previous adjacency list.}
This optimization does not extend to higher $(r, s)$ nucleus decomposition, because if an $r$-clique has been peeled for $r > 2$, its edges may still be involved in other unpeeled $r$-cliques.
Thus, there is no natural way to contract out an $r$-clique from a graph for $r > 2$ to allow for work savings in future rounds.
In Section \ref{sec:eval-tuning}, we see up to a 1.08x speedup using graph contraction in $(2, 3)$ nucleus decomposition.
\section{Evaluation}

\subsection{Environment and Graph Inputs}
We run experiments on a Google Cloud Platform instance of a 30-core machine with two-way hyper-threading, with 3.8 GHz Intel Xeon Scalable (Cascade Lake) processors and 240 GiB of main memory. We compile with g++ (version 7.4.0) and use the \texttt{-O3} flag. We use the work-stealing scheduler \algname{ParlayLib} by Blelloch \textit{et al.}~\cite{BlAnDh20}. We terminate any experiment that takes over 6 hours.
We test our algorithms on real-world graphs from the Stanford Network Analysis Project (SNAP)~\cite{SNAP}, shown in 
Figure~\ref{fig:core-rho} with $\rho_{(r,s)}$ and the maximum $(r, s)$-core numbers for $r < s \leq 7$. We also use synthetic rMAT graphs \cite{ChakrabartiZF04}, with $a = 0.5$, $b =c=0.1$, and $d = 0.3$.

We compare to {\color{change}\sariyuce \textit{et al.}'s state-of-the-art parallel~\cite{sariyuce2017parallel} and serial~\cite{Sariyuce2017} nucleus decomposition implementations}, which address $(2, 3)$ and $(3, 4)$ nucleus decomposition.
For the special case of $(2, 3)$ nucleus decomposition, we compare 
to Che \textit{et al.}'s~\cite{ChLaSuWaLu20} highly optimized state-of-the-art parallel \algname{pkt-opt-cpu}, {\color{change} and implementations by Kabir and Madduri's \algname{pkt}~\cite{kabir2017parallel}, Smith \textit{et al.}'s \algname{msp}~\cite{smith2017truss}, and Blanco \textit{et al.}~\cite{BlLoKi19}, which represent the top-performing implementations from the MIT GraphChallenge~\cite{GraphChallenge}.} We run all of these using the same environment as our experiments on \ournd{}. 

\subsection{Tuning Optimizations}\label{sec:eval-tuning}
There are six total optimizations that we implement in \ournd{}: different numbers of levels in our parallel hash table $T$ (\emph{numbers of levels}), the use of contiguous space (\emph{contiguous space}), a binary search versus stored pointers to perform the mapping of indices representing $r$-cliques to the constituent vertices (\emph{inverse index map}), graph relabeling (\emph{graph relabeling}), a simple array method versus a list buffer versus a parallel hash table for maintaining the set $U$ of $r$-cliques with changed $s$-clique counts per round (\emph{update aggregation}), and graph contraction specifically for $(2, 3)$ nucleus decomposition (\emph{graph contraction}).

\begin{figure}[t]
\begin{subfigure}{.38\columnwidth}
\small
\centering
\setlength{\tabcolsep}{2pt}
\scalebox{0.9}{
\begin{tabular}{lll}
\toprule
& $n$         & $m$                  \\ \midrule
\textbf{amazon} &334,863   & 925,872  \\ \hline
\textbf{dblp} &317,080   & 1,049,866  \\ \hline
\textbf{youtube} & 1,134,890  & 2,987,624  \\ \hline
\textbf{skitter} &1,696,415 & 11,095,298  \\ \hline
\textbf{livejournal}.&3,997,962 & 34,681,189  \\ \hline
\textbf{orkut} & 3,072,441 & 117,185,083 \\ \hline
\textbf{friendster} & 65,608,366 & $1.806\times 10^9$  \\ \hline
\end{tabular}
}
\end{subfigure}%
      \hfill
\begin{subfigure}{.46\columnwidth}
   \centering
   \includegraphics[width=\columnwidth, page=8]{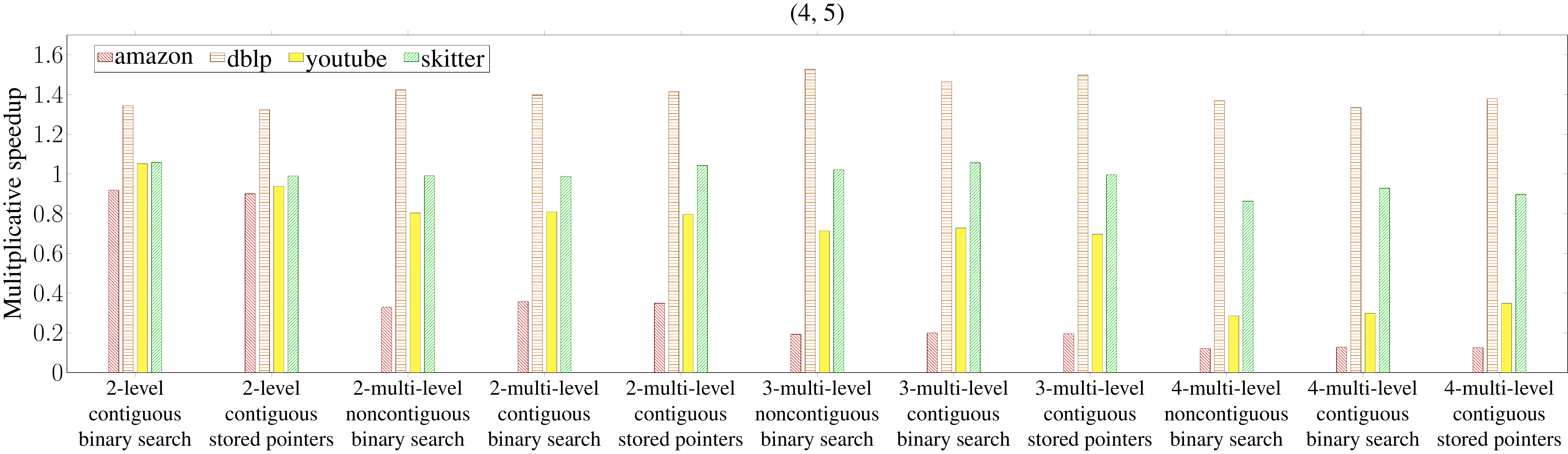}
    \end{subfigure}%
    \caption{Sizes of our input graphs, all of which are from~\cite{SNAP}, on the left, and the number of rounds required $\rho_{(r,s)}$ and the maximum $(r, s)$-core numbers for $r < s \leq 7$ for each graph on the right.
   }
    \label{fig:core-rho}
\end{figure}

\begin{figure*}[t]
    \centering
    \begin{subfigure}{.57\textwidth}
    \centering
    \includegraphics[width=\textwidth, page=3]{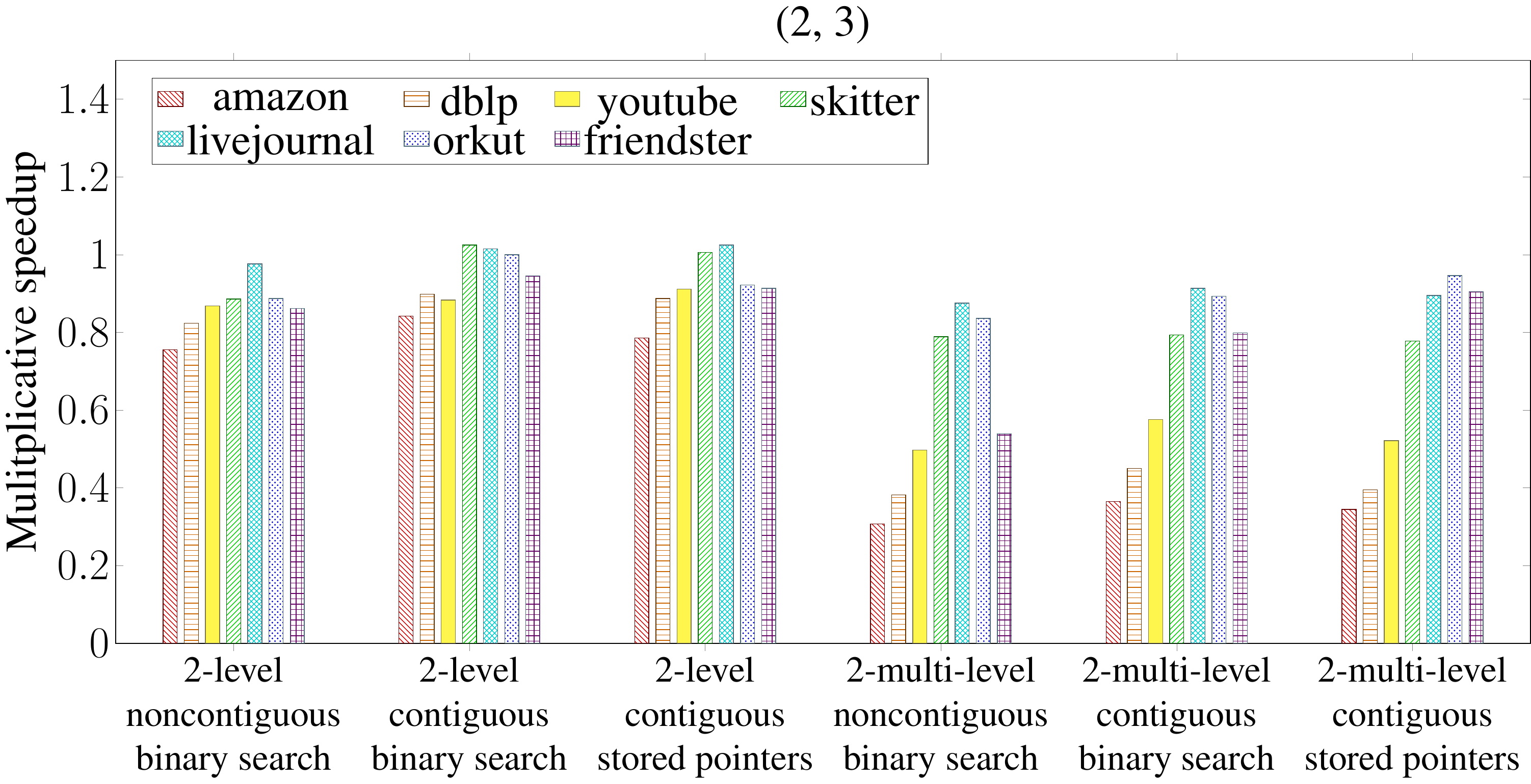}
    \end{subfigure}%
      \hfill
\begin{subfigure}{.39\textwidth}
   \centering
   \includegraphics[width=\textwidth, page=10]{figures/fig_speedups.pdf}
   \end{subfigure}
   \caption{On the left, multiplicative speedups of different combinations of optimizations on $T$ in \ournd{}, over an unoptimized setting of \ournd{}, for $(3, 4)$ nucleus decomposition. On the right, multiplicative space savings for $T$ of different combinations of optimizations on $T$ in \ournd{}, over the unoptimized setting, for $(3, 4)$ nucleus decomposition. Note that the space usage between the non-contiguous option and the contiguous option is equal. Friendster is omitted because \ournd{} runs out of memory for this graph.}
    \label{fig:nd-tune-lvl}
\end{figure*}

\begin{figure*}[t]
    \centering
  \includegraphics[width=0.8\textwidth, page=1]{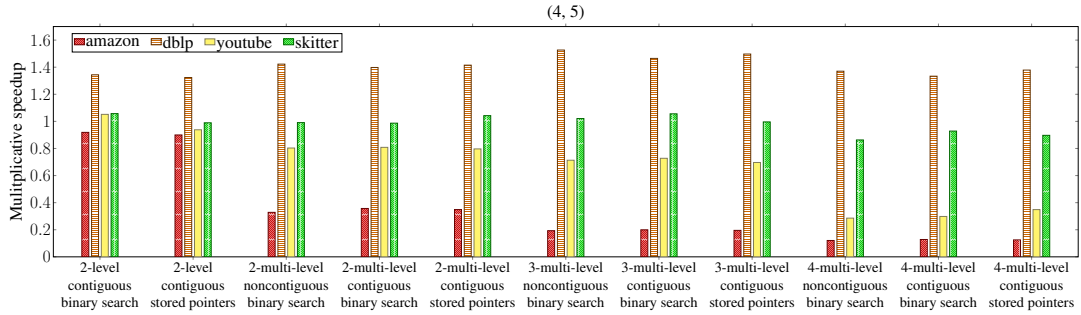}
   \caption{Multiplicative speedups of different combinations of optimizations on $T$ in \ournd{}, over an unoptimized setting of \ournd{}, for $(4, 5)$ nucleus decomposition. Livejournal, orkut, and friendster are omitted, because \ournd{} runs out of memory for these instances.   }
    \label{fig:nd-tune-lvl-45}
\end{figure*}

\begin{figure*}[t]
   \centering
   \includegraphics[width=0.7\textwidth, page=2]{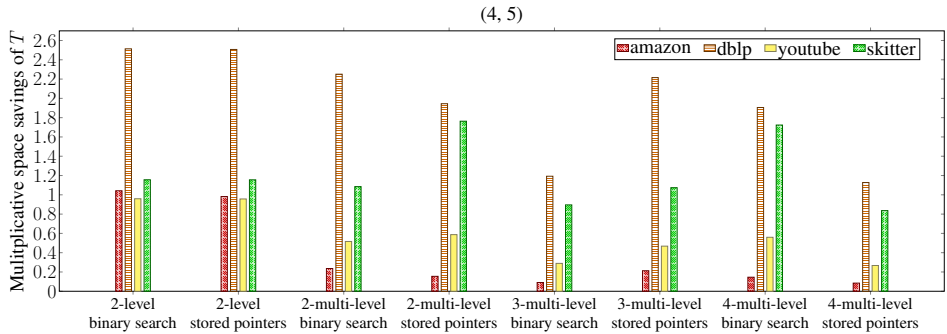}
   \caption{Multiplicative space savings for $T$ of different combinations of optimizations on $T$ in \ournd{}, over the unoptimized setting, for $(4, 5)$ nucleus decomposition. Note that the space usage between the non-contiguous option and the contiguous option is equal. Also, livejournal, orkut, and friendster are omitted, because \ournd{} runs out of memory for these instances.   }
    \label{fig:nd-tune-space-45}
\end{figure*}

\begin{figure*}[t]
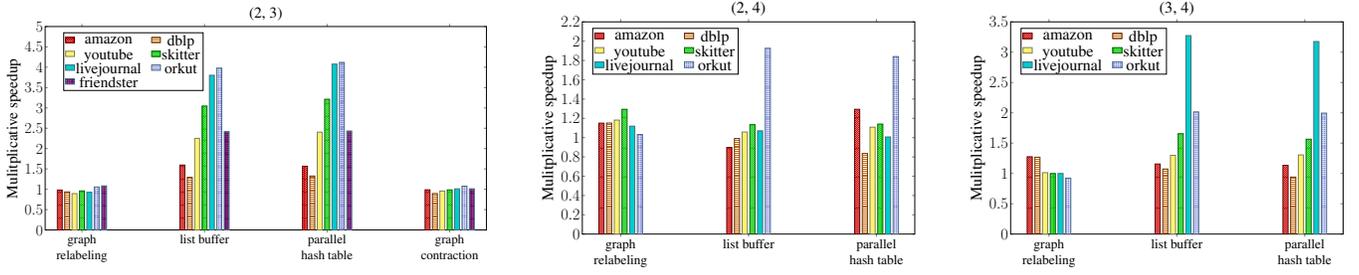

    \centering
   \begin{subfigure}{.36\textwidth}
   \centering
   \includegraphics[width=\columnwidth, page=5]{figures/fig_speedups.pdf}
   \end{subfigure}%
   \hfill
   \begin{subfigure}{.28\textwidth}
   \centering
   \includegraphics[width=\columnwidth, page=7]{figures/fig_speedups.pdf}
      \end{subfigure}%
   \hfill
    \begin{subfigure}{.28\textwidth}
   \centering
   \includegraphics[width=\columnwidth, page=9]{figures/fig_speedups.pdf}
      \end{subfigure}
   \caption{Multiplicative speedups of the graph relabeling, update aggregation, and graph contraction optimizations in \ournd{}, over a two-level setting with contiguous space and stored pointers, and using the simple array for $U$. Friendster is omitted from the $(2, 4)$ and $(3, 4)$ nucleus decomposition experiments, because the unoptimized \ournd{} times out for $(2,4)$, and \ournd{} runs out of memory for $(3,4)$.
   }
    \label{fig:nd-tune-other}
\end{figure*}

The most unoptimized form of \ournd{} uses a one-level parallel hash table $T$, no graph relabeling, and the simple array method for the update aggregation (as this is the simplest and most intuitive method), and in the case of $(2, 3)$ nucleus decomposition, no graph contraction. Note that the contiguous space and inverse index map optimizations do not apply to the one-level $T$.

Because the first three optimizations, namely numbers of levels, contiguous space, and the inverse index map, apply specifically to the parallel hash table $T$, while the remaining optimizations apply generally to the rest of the nucleus decomposition algorithm, 
we first tune different combinations of options for the first three optimizations against the unoptimized \ournd{}. We then separately tune the remaining optimizations, namely graph relabeling, update aggregation, and graph contraction optimizations, against both the unoptimized \ournd{} and the best choice of optimizations from the first comparison.

\begin{figure*}[t]
    \centering
   \begin{subfigure}{.53\textwidth}
   \centering
   \includegraphics[width=\columnwidth, page=4]{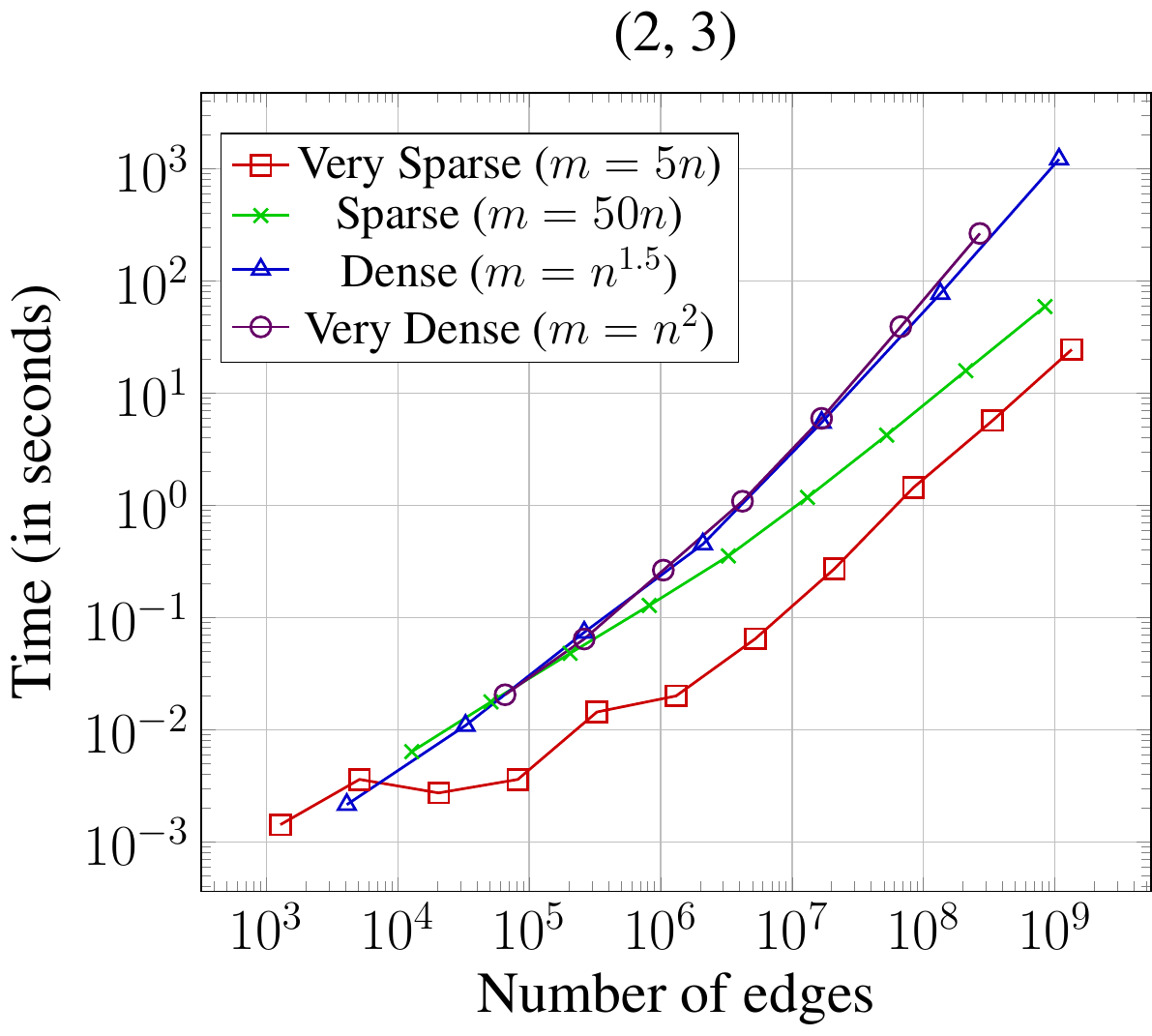}
   \end{subfigure}%
   \hfill
   \begin{subfigure}{.42\textwidth}
   \centering
   \includegraphics[width=\columnwidth, page=4]{figures/fig_speedups_new.pdf}
      \end{subfigure}%
   \caption{{\color{change}Multiplicative slowdowns over our parallel \ournd{} of \algname{pkt-opt-cpu} and \algname{pkt} for $(2, 3)$ nucleus decomposition, and of single-threaded \ournd{}, \algname{pnd}, \algname{and}, \algname{and-nn}, and \algname{nd}
   for $(2, 3)$ and $(3, 4)$ nucleus decomposition. The label \ournd{} in the legend refers to our single-threaded running times. 
   We have omitted bars for \algname{pnd}, \algname{and}, \algname{and-nn}, and \algname{nd} where these implementations run out of memory or time out. We have included in parentheses the times of our parallel \ournd{} on 30 cores with hyper-threading. We have also included a line marking a multiplicative slowdown of 1 for $r=2$, $s=3$, and we see that \algname{pkt-opt-cpu} outperforms \ournd{} on skitter, livejournal, orkut, and friendster.}
   }
    \label{fig:speedups-23-34}
\end{figure*}

\begin{figure}[t]
   \centering
   \includegraphics[width=0.9\columnwidth, page=7]{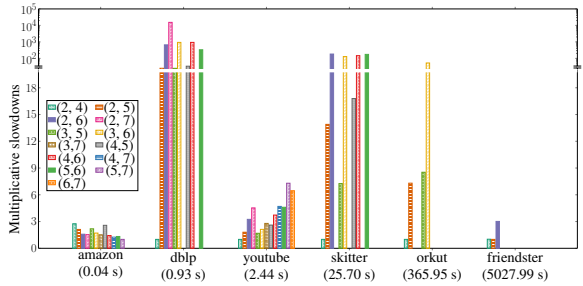}
   \caption{
   Multiplicative slowdowns of parallel \ournd{} for each $(r,s)$ combination over the fastest running time for parallel \ournd{} across all $r < s \leq 7$ for each graph (excluding $(2, 3)$ and $(3, 4)$, which are shown in Figure \ref{fig:speedups-23-34}). The fastest running time is labeled in parentheses below each graph. Also, livejournal is excluded because for these $r$ and $s$, \ournd{} is only able to complete $(2, 4)$ nucleus decomposition, in 484.74 seconds, and the rest timed out. We have omitted bars where \ournd{} runs out of memory or times out. 
   }
    \label{fig:speedups-rest}
\end{figure}

\begin{figure*}[t]
    \centering
    \begin{minipage}{.25\textwidth}
        \centering
        \includegraphics[width=0.9\columnwidth, page=11]{figures/fig_speedups.pdf}
   \caption{Speedup of \ournd{} over its single-threaded running times. {\color{change} "30h" denotes 30-cores with two-way hyper-threading.}
   }
    \label{fig:num-workers}
    \end{minipage}%
    \hspace{10pt}
    \begin{minipage}{0.72\textwidth}
        \centering
        \begin{subfigure}{.33\textwidth}
   \centering
   \includegraphics[width=\columnwidth, page=1]{figures/fig_revision.pdf}
   \end{subfigure}%
   \hfill
   \begin{subfigure}{.33\textwidth}
   \centering
   \includegraphics[width=\columnwidth, page=2]{figures/fig_revision.pdf}
      \end{subfigure}%
      \hfill
   \begin{subfigure}{.33\textwidth}
   \centering
   \includegraphics[width=\columnwidth, page=3]{figures/fig_revision.pdf}
      \end{subfigure}%
   \caption{\color{change} Running times of parallel \ournd{} on rMAT graphs of varying densities for $(2, 3)$, $(3, 4)$, and $(4, 5)$ nucleus decomposition. We remove duplicate generated edges.
   }
    \label{fig:rmat-23-34-45}
    \end{minipage}
\end{figure*}

\myparagraph{Optimizations on $T$}
Across $(2,3)$, $(2, 4)$, $(3, 4)$, and $(4, 5)$ nucleus decomposition, using a two-level $T$ with contiguous space and stored pointers for the inverse index map gives the overall best configuration across these $r$ and $s$ values, with up to 1.32x speedups over the unoptimized case. This configuration either outperforms or offers comparable performance to other configurations. Figure \ref{fig:nd-tune-lvl} and \ref{fig:nd-tune-lvl-45} shows the multiplicative speedup of each combination
of optimizations on $T$ over the unoptimized $T$ for $(3, 4)$ nucleus decomposition and $(4, 5)$ nucleus decomposition, respectively; friendster is omitted from our $(3, 4)$ experiments because \ournd{} runs out of memory on these graphs. Also, the speedups for $(2, 3)$ and $(2, 4)$ nucleus decomposition are omitted, but show similar behavior overall.

On the smallest graph, amazon, we see universally poor performance of the two-level and multi-level options, and the one-level $T$ outperforms these options; however, we note that the one-level running times for \ournd{} on amazon in these cases is $< 0.2$ seconds and the maximum core number of amazon is particularly small ($\leq 10$), and so the optimizations simply add too much overhead to see performance improvements.
The main case where the two-level $T$ performs noticeably worse than other options is for large graphs and for large $r$ and $s$. 
In $(3, 4)$ nucleus decomposition, we note that for orkut, using a $3$-multi-level $T$, also with contiguous space and stored pointers, significantly outperforms the analogous two-level option, with a 1.34x speedup over the unoptimized case compared to a 1.11x speedup. 
Similarly, in $(4, 5)$ nucleus decomposition, using a 3-multi-level $T$ offers comparable performance to the two-level $T$ on dblp and skitter (the 3-multi-level $T$ gives 1.46x and 1.06x speedups over the unoptimized $T$, respectively, while the 2-level $T$ gives 1.32x and 1.06x speedups, respectively), but underperforms on amazon and youtube. The benefit of using large $\ell$ relative to $r$ is difficult to observe for small $r$, since $\ell \leq r$ and speedups only appear in  graphs with sufficiently many $r$-cliques. 
We see relatively small speedups from larger $\ell$ for certain graphs, but in these cases, the performance is comparable, so we consider the two-level $T$ to be the best overall option.

Additionally, the two-level and multi-level options offer significant space savings due to their compact representation of vertices shared among many $r$-cliques, particularly for larger graphs and larger values of $r$ and $s$. 
Figures \ref{fig:nd-tune-lvl} and \ref{fig:nd-tune-space-45} also show the space savings in $T$ of each optimization on $(3, 4)$ nucleus decomposition and $(4, 5)$ nucleus decomposition, respectively; we omit the figures for $(2, 3)$ and $(2, 4)$ nucleus decomposition due to space limitations, although they show similar behavior. 
Across $(2, 3)$ and $(2, 4)$ nucleus decomposition, the two-level options give up to a 1.79x reduction
in space usage, and this increases to up to a 2.15x reduction in space usage on $(3, 4)$ nucleus decomposition and up to a 2.51x reduction in space usage on $(4, 5)$ nucleus decomposition. 
We similarly see greater space reductions in using $\ell$-multi-level $T$ for $\ell > 2$ on $(4, 5)$ nucleus decomposition compared to the same $\ell$ on $(3, 4)$ nucleus decomposition for the same graphs; however, $r$ is not large enough and there is not enough overlap between $r$-cliques such that using $\ell > 2$ offers significant space savings over the two-level options.

Overall, the optimal setting for the parallel hash table $T$  is a two-level combination of an array and a parallel hash table, with contiguous space and stored pointers for the inverse index map.

\myparagraph{Other optimizations}
We now consider the graph relabeling and update aggregation optimizations, fixing a one-level setting and a two-level setting with contiguous space and stored pointers for the inverse index map, and using the simple array for $U$ with a fetch-and-add to reserve every slot. Figure \ref{fig:nd-tune-other} shows these speedups for the two-level case; we omit the analogous figure for the one-level case, which shows similar behavior to the two-level case. 

Across $(2,3)$, $(2, 4)$, and $(3, 4)$ nucleus decomposition, the graph relabeling optimization gives up to 1.23x speedups on the one-level case and up to 1.29x speedups on the two-level case. We see greater speedups on the two-level case, because the increased locality across both levels due to the relabeling is more significant, whereas there is little improved locality in the one-level case. We note that graph relabeling provides minimal speedups in $(2,3)$ nucleus decomposition, with slowdowns of up to 1.11x, whereas graph relabeling is almost universally optimal in $(2, 4)$ and $(3, 4)$ nucleus decomposition. This is due to fewer benefits from locality when merely computing triangle counts from edges, versus computing higher clique counts.

In terms of the options for update aggregation, using a list buffer gives up to 3.43x speedups on the one-level case and up to 3.98x speedups on the two-level case. Using a parallel hash table gives up to 3.37x speedups on the one-level case and up to 4.12x speedups on the two-level case. Notably, the parallel hash table is the fastest for $(2, 3)$ nucleus decomposition, whereas the list buffer outperforms the parallel hash table for $(2, 4)$ and $(3, 4)$ nucleus decomposition. This behavior is particularly evident on larger graphs, where the time to compute updated $s$-clique counts is more significant, and thus there is less contention in using a list buffer.

Finally, in the special case of $(2,3)$ nucleus decomposition, we evaluate the performance of graph contraction over the two-level setting. We see up to 1.08x speedups using graph contraction, but up to 1.11x slowdowns when using graph contraction on small graphs, due to the increased overhead; however, the two-level running times of \ournd{} on these graphs is $<0.2$ seconds.

Overall, the optimal setting for $(2,3)$ nucleus decomposition is to use a parallel hash table for update aggregation and graph contraction (with no graph relabeling), and the optimal setting for general $(r, s)$ nucleus decomposition is to use a list buffer for update aggregation and graph relabeling. Combining all optimizations, over $(2, 3)$, $(2, 4)$, and $(3, 4)$ nucleus decomposition, we see up to a 5.10x speedup over the unoptimized \ournd{}.

\subsection{Performance}

Figures \ref{fig:speedups-23-34} and \ref{fig:speedups-rest} show the parallel runtimes for \ournd{} using the optimal settings described in Section \ref{sec:eval-tuning}, for $(r,s)$ where $r < s \leq 7$. 
We only show in these figures our self-relative speedups on $(r,s)=(2,3)$ and $(r,s)=(3,4)$, but we computed self-relative speedups for all $r < s \leq 7$, and  
overall, on 30 cores with two-way hyper-threading, \ournd{} obtains 3.31--40.14x self-relative speedups. We see larger speedups on larger graphs and for greater $r$. We also see good scalability over different numbers of threads, which we show in Figure \ref{fig:num-workers} for $(2, 3)$, $(2, 4)$, and $(3, 4)$ nucleus decomposition on dblp, skitter, and livejournal. {\color{change} Figure \ref{fig:rmat-23-34-45} additionally shows the scalability of \ournd{} over rMAT graphs of varying sizes and varying edge densities. We see that our algorithms scale in accordance with the increase in the number of $s$-cliques, depending on the density of the graph.} 

\myparagraph{Comparison to other implementations}
Figure \ref{fig:speedups-23-34} also shows the comparison of our parallel $(2, 3)$ and $(3, 4)$ nucleus decomposition implementations to other implementations.
We compare to \sariyuce  \textit{et al.}'s~\cite{Sariyuce2017,sariyuce2017parallel} parallel implementations, including their global implementation \algname{pnd}, their asynchronous local implementation \algname{and}, and their asynchronous local implementation with notification \algname{and-nn}, where the notification mechanism offers performance improvements at the cost of space usage. We run the local implementations to convergence. We furthermore compare to their implementation \algname{nd}, which is a serial version of \algname{pnd}.
We note that \sariyuce \textit{et al.} provide implementations only for $(2, 3)$ and $(3, 4)$ nucleus decomposition. 

Compared to \algname{pnd}, \ournd{} achieves 3.84--54.96x speedups, and compared to \algname{and}, \ournd{} achieves 1.32--60.44x speedups. Notably, \algname{pnd} runs out of memory on friendster for $(2, 3)$ nucleus decomposition, while our implementation can process friendster in 368.62 seconds. Moreover, \algname{and} runs out of memory on both orkut and friendster for both $(2, 3)$ and $(3, 4)$ nucleus decomposition, while \ournd{} is able to process orkut and friendster for $(2, 3)$ nucleus decomposition, and orkut for $(3, 4)$ nucleus decomposition.

Compared to \algname{and-nn}, \ournd{} achieves 1.04--8.78x speedups. \algname{and-nn} outperforms the other implementations by \sariyuce  \textit{et al.}, but due to its increased space usage, it is unable to run on the larger graphs skitter, livejournal, orkut, and friendster for both $(2,3)$ and $(3,4)$ nucleus decomposition.
Considering the best of \sariyuce \textit{et al.}'s parallel implementations for each graph and each $(r,s)$, \ournd{} achieves 1.04--54.96x speedups overall. Compared to \sariyuce \textit{et al.}'s serial implementation \algname{nd}, \ournd{} achieves 8.19--58.02x speedups. We significantly outperform \sariyuce \textit{et al.}'s algorithms due to the work-efficiency of our algorithm.
{\color{change} Notably, \algname{and} and \algname{and-nn} are not work-efficient because they perform a local algorithm, in which each $r$-clique locally updates its $s$-clique-core number until convergence, compared to the work-efficient peeling process where the minimum $s$-clique-core is extracted from the entire graph in each round. The total work of these local updates can greatly outweigh the total work of the peeling process. We measured the total number of times $s$-cliques were discovered in each algorithm, and found that \algname{and} computes 1.69--46.03x the number of $s$-cliques in \ournd{}, with a median of 15.15x. \algname{and-nn} reduces this at the cost of space, but still computes up to 3.45x the number of $s$-cliques in \ournd{}, with a median of 1.4x.

Moreover, \algname{pnd} does perform a global peeling-based algorithm like \ournd{}, but does not parallelize within the peeling process; more concretely, all $r$-cliques with the same $s$-clique count can be peeled simultaneously, which \ournd{} accomplishes by introducing optimizations, notably the update aggregation optimization, that specifically address synchronization issues when peeling multiple $r$-cliques simultaneously. \algname{pnd} instead peels these $r$-cliques sequentially in order to avoid these synchronization problems, leading to significantly more sequential peeling rounds than required in \ournd{}; in fact, \algname{pnd} performs 5608--84170x the number of rounds of \ournd{}. }

{\color{change} We note that additionally, the speedups of \ournd{} over \algname{pnd}, \algname{and}, and \algname{and-nn} are not solely due to our use of an efficient parallel $k$-clique counting subroutine~\cite{shi2020parallel}. We replaced the $k$-clique counting subroutine in \ournd{} with that used by \sariyuce  \textit{et al.}~\cite{Sariyuce2017,sariyuce2017parallel}, and found that \ournd{} using \sariyuce  \textit{et al.}'s $k$-clique counting subroutine achieves between 1.83--28.38x speedups over \sariyuce  \textit{et al.}'s  best implementations overall for $(2, 3)$ and $(3, 4)$ nucleus decomposition. Within \ournd{}, the efficient $k$-clique counting subroutine gives up to 3.04x speedups over the subroutine used by \sariyuce  \textit{et al.}, with a median speedup of 1.03x.}

\myparagraph{Comparison to $k$-truss implementations}
In the special case of $(2, 3)$ nucleus decomposition, or $k$-truss, we compare our parallel implementation to Che \textit{et al.}'s~\cite{ChLaSuWaLu20} highly optimized parallel CPU implementation \algname{pkt-opt-cpu}, using all of their successive optimizations and considering the best performance across different reordering options, including degree reordering, $k$-core reordering, and no reordering. Figure \ref{fig:speedups-23-34} also shows the comparison of \ournd{} to \algname{pkt-opt-cpu}. \ournd{} achieves up to 1.64x speedups on small graphs, but up to 2.27x slowdowns on large graphs compared to \algname{pkt-opt-cpu}. However, we note that \algname{pkt-opt-cpu} is limited in that it solely implements $(2, 3)$ nucleus decomposition, and its methods do not generalize to other values of $(r,s)$. 
{\color{change} \ournd{} outperforms \algname{pkt-opt-cpu} on small graphs due to a more efficient graph reordering subroutine; \ournd{} computes a low out-degree orientation which it then uses to reorder the graph, and \ournd{}'s reordering subroutine achieves a 3.07--5.16x speedup over \algname{pkt-opt-cpu}'s reordering subroutine.\footnote{\color{change} For \algname{pkt-opt-cpu}, we consider the reordering option that gives the fastest overall running time for each graph.} \algname{pkt-opt-cpu} uses its own parallel sample sort implementation, which is slower compared to that used by \ournd{}~\cite{DhBlSh18}. However, the cost of graph reordering is negligible compared to the cost of computing the $(2, 3)$ nucleus decomposition in large graphs, and \algname{pkt-opt-cpu} uses highly optimized intersection subroutines which achieve greater speedups over \ournd{}'s generalized implementation.}

We also compared to additional $(2, 3)$ nucleus decomposition implementations. Specifically, we compared to 
Kabir and Madduri's \algname{pkt}~\cite{kabir2017parallel}, which outperforms Che \textit{et al.}'s \algname{pkt-opt-cpu} on the small graphs amazon and dblp by 1.01--1.53x, and which is also shown in Figure \ref{fig:speedups-23-34}. However, \ournd{} achieves 1.07--2.88x speedups over \algname{pkt} on all graphs, including amazon and dblp. Moreover, we compared to Smith \textit{et al.}'s \algname{msp}~\cite{smith2017truss}, but found that \algname{msp} is slower than \algname{pkt} and \algname{pkt-opt-cpu}, and \ournd{} achieves 2.35--7.65x speedups over \algname{msp}.
We compared to Blanco \textit{et al.}'s~\cite{BlLoKi19} implementations as well, which we found to also be slower than \algname{pkt} and \algname{pkt-opt-cpu}, and \ournd{} achieves 2.45--21.36x speedups over their best implementation.\footnote{Blanco \textit{et al.}'s implementation requires the maximum $k$-truss number to be provided as an input, which we did in these experiments.}
We also found that Che \textit{et al.}'s implementations outperform the reported numbers from a recent parallel $(2,3)$ nucleus decomposition implementation by Conte \textit{et al.}~\cite{CoDeGrMaVe18}.

\section{Conclusion}
We have presented a novel theoretically efficient parallel algorithm
for $(r, s)$ nucleus decomposition, which
improves upon the previous best
theoretical bounds. We have also developed practical
optimizations and showed that they significantly improve the
performance of our algorithm. Finally, we have provided a comprehensive
experimental evaluation demonstrating that on a 30-core machine with two-way hyper-threading 
our algorithm achieves 
up to 55x speedup over the previous state-of-the-art parallel
implementation.

\begin{acks}
This research was supported by NSF Graduate Research Fellowship
\#1122374, 
DOE Early Career Award \#DE-SC0018947,
NSF CAREER Award \#CCF-1845763, Google Faculty Research Award, Google Research Scholar Award, DARPA
SDH Award \#HR0011-18-3-0007, and Applications Driving Architectures
(ADA) Research Center, a JUMP Center co-sponsored by the Semiconductor Research Corporation (SRC) and DARPA.
\end{acks}

\bibliographystyle{ACM-Reference-Format}
\bibliography{references}

\appendix 

\section{Discussion of Theoretically Efficient Bounds}
We discuss in this section the proof that the complexity bounds for our $(r, s)$ nucleus decomposition algorithm, \ournd{}, improve upon that in prior work. In particular, \ournd{} takes space proportional to the number of $r$-cliques in the graph, and we proved the following work bounds for \ournd{} in Section \ref{sec:alg-nd}, where $\rho_{(r, s)}(G)$ is upper bounded by the total number of $r$-cliques in the graph by definition.

\begin{reptheorem}{thm:peelexact}
  \ournd{} computes the $(r,s)$ nucleus decomposition in $\BigO{m\alpha^{s-2} +
  \rho_{(r,s)}(G)\log n}$ amortized expected work and $\BigO{\rho_{(r,s)}(G) \log n + \log^2 n}$ span \whp{}, where
  $\rho_{(r,s)}(G)$ is the $(r,s)$ peeling complexity of $G$.
\end{reptheorem}

We compare our bound to those given by \sariyuce{} \textit{et al.}~\cite{Sariyuce2017} for their sequential $(r, s)$ nucleus decomposition algorithms. Their bounds are given in terms of the number of $c$-cliques containing each vertex $v$, or $ct_c(v)$, and the work of an arbitrary $c$-clique enumeration algorithm, or $RT_c$. We also let $n_c$ denote the total number of $c$-cliques.
In particular, they give two complexity bounds depending on the space usage of their algorithms.
Assuming space proportional to the number of $s$-cliques and the number of $r$-cliques in the graph, they compute the $(r,s)$ nucleus decomposition in $\BigO{RT_r + RT_s}=\BigO{RT_s}$ work, and assuming space proportional to only the number of $r$-cliques in the graph, they compute the $(r,s)$ nucleus decomposition in $\BigO{RT_r + \sum_{v \in V(G)} ct_r(v) \cdot \text{deg}(v)^{s-r}}$ work.

Assuming a work-efficient $c$-clique listing algorithm, specifically Shi \textit{et al.}'s work-efficient $c$-clique listing algorithm~\cite{shi2020parallel}, we see that \sariyuce{} \textit{et al.}'s bounds are $\BigO{m \alpha^{s-2}}$ and $\BigO{m \alpha^{r-2} + \sum_{v \in V(G)} ct_r(v) \cdot \text{deg}(v)^{s-r}}$ assuming space proportional to the number of $s$-cliques and $r$-cliques and space proportional to only the number of $r$-cliques, respectively.

We note that \sariyuce{} \textit{et al.}\ do not include the work required to retrieve the $r$-clique with the minimum $s$-clique count in their complexity bound. Importantly, the time complexity of this step is non-negligible in the theoretical bounds of $(r, s)$ nucleus decomposition. There are two possibilities depending on space usage. 
If space proportional to the number of $s$-cliques is allowed, it is possible to use an array proportional to the maximum number of $s$-cliques per vertex to allow for efficient retrieval of the $r$-clique with the minimum $s$-clique count in any given round, which would take $\BigO{n_s}$ work in total.
However, if the space is limited to the number of $r$-cliques, a space-efficient heap must be used, 
which would add an additional $\BigO{n_r \log n}$ term to the work. 

Adding the time complexity of retrieving the $r$-clique with the minimum $s$-clique count, assuming space proportional to the number of $s$-cliques and $r$-cliques, \sariyuce{} \textit{et al.}'s algorithm takes $\BigO{m \alpha^{s-2} + n_s} = \BigO{m \alpha^{s-2}}$ work, and assuming space proportional to only the number of $r$-cliques, \sariyuce{} \textit{et al.}'s algorithm takes $\BigO{m \alpha^{r-2} + \sum_{v \in V(G)} ct_r(v) \cdot \text{deg}(v)^{s-r} + n_r \log n}$ work. 

\myparagraph{Space proportional to the number of $s$-cliques and $r$-cliques}
We first compare to \sariyuce{} \textit{et al.}'s work considering space proportional to the number of $s$-cliques and $r$-cliques. 

We note that assuming space proportional to the number of $s$-cliques, we can replace the $\BigO{\rho_{(r,s)}(G)\log n}$ term in \ournd{}'s work bound with $\BigO{n_s}$ while maintaining the same span by
replacing the batch-parallel Fibonacci heap~\cite{Shi2020} with an array of size proportional to the total number of $s$-cliques (to store and retrieve the $r$-cliques with the minimum $s$-clique count in any given round).

In more detail, let our bucketing structure be represented by an array of size $x = \BigO{\text{poly}(y)}$, where each array cell represents a bucket, and there are $y$ elements in the bucket structure. Then, we can
repeatedly pop the minimum bucket until the structure is empty using $\BigO{x}$ total work and $\BigO{\rho \log y}$ span, assuming it takes $\BigO{\rho}$ rounds to empty the structure. This is done by splitting the array into regions $r_i = [2^i, 2^{i + 1}]$ for $i = [0, \log x)$. Let us denote the number of buckets in each region $r_i$ by $|r_i|$. We start at the first region $r_0$ and search in parallel for the first non-empty bucket, using a parallel reduce. We return the first non-empty bucket if it exists, and otherwise, we repeat with the next region $r_1$, and so on. We note that we will never revisit the previously searched region $r_0$ once it has been found to be empty. We repeat the entire process of splitting the array into regions and searching each region in order for each subsequent pop query, starting from the previously popped bucket. In this manner, we maintain  $\BigO{\log y}$ span to pop each non-empty bucket, while incurring $\BigO{x}$ total work.

Thus, allowing space proportional to the number of $s$-cliques, \ournd{} is work-efficient, taking $\BigO{m \alpha^{s-2} + n_s} = \BigO{m \alpha^{s-2}}$ work.

\myparagraph{Space proportional to the number of $r$-cliques}
We now discuss \sariyuce{} \textit{et al.}'s work considering space proportional to only the number of $r$-cliques.
We claim that \ournd{} improves upon \sariyuce{} \textit{et al.}'s algorithm.

In order to show this claim, we discuss the work of each step of \ournd{} in more detail, as we do in the proof of Theorem \ref{thm:peelexact}.\footnote{We provide a closed form work bound for \ournd{}, whereas \sariyuce{} \textit{et al.}'s bound is not closed. As a result, directly comparing our closed form bound to \sariyuce{} \textit{et al.}'s bound is difficult; instead, we provide here a more detailed comparison of the exact work required by \ournd{} with \sariyuce{} \textit{et al.}'s work bound.}
First, we consider the work of inserting $r$-cliques into our bucketing structure $B$, which takes $\BigO{m \alpha^{r-2}}$ work. Then, because each $r$-clique has its bucket decremented at most once per incident $s$-clique, we incur work proportional to the total number of $s$-cliques, or $\BigO{ct_s(v)}$. Extracting the minimum bucket takes $\BigO{\rho_{(r,s)}(G)\log n }$ amortized expected work using the batch-parallel Fibonacci heap.

Finally, it remains to discuss the work of obtaining updated $s$-clique counts after peeling each set of $r$-cliques, in the \algname{Update} subroutine. As discussed in the proof of Theorem \ref{thm:peelexact}, it takes $\BigO{\sum_R \min_{1 \leq i \leq r} \text{deg}(v_i)}$ work \whp{} to intersect the neighbors of each vertex $v\in R$ over all $r$-cliques $R$. \ournd{} then considers the set of vertices in this intersection $I$, and performs successive intersection operations with the oriented neighbors of each vertex in $I$. These intersection operations are bounded by the arboricity $\BigO{\alpha}$, but they are also bounded by $\BigO{\min_{1 \leq i \leq r} \text{deg}(v_i)}$ as the maximum size of $I$. Thus, this step takes $\BigO{\sum_R \min_{1 \leq i \leq r} \allowbreak  \text{deg}(v_i)^{s-r}} \allowbreak = \allowbreak  \BigO{\sum_{v \in V(G)} ct_r(v) \cdot \text{deg}(v)^{s-r}}$ work.

In total, \ournd{} incurs as an upper bound $\BigO{m \alpha^{r-2} + ct_s(v) + \sum_{v \in V(G)} ct_r(v) \cdot \text{deg}(v)^{s-r} +  \rho_{(r,s)}(G)\log n } = 
\BigO{m \alpha^{r-2} + \sum_{v \in V(G)} ct_r(v) \cdot \text{deg}(v)^{s-r} +  \rho_{(r,s)}(G)\log n }$ work, and thus does not exceed \sariyuce{} \textit{et al.}'s work bound.

Note that \ournd{} improves upon \sariyuce{} \textit{et al.}'s work in the \algname{Update} subroutine. Notably, \sariyuce{} \textit{et al.}'s work assumes that for each vertex $v$ participating in an $r$-clique, the work incurred is $\BigO{\text{deg}(v)^{s-r}}$. However, our \algname{Update} subroutine considers only the minimum vertex degree of each $r$-clique. Moreover, after finding the set $I$ of candidate vertices to extend each $r$-clique into an $s$-clique, \ournd{} uses the $\BigO{\alpha}$-oriented neighbors of these candidate vertices in the intersection subroutine. The actual work incurred in the intersection is thus the minimum of the size of $I$ and of the arboricity oriented out-degrees of the vertices in $I$, which further improves upon the $\BigO{\text{deg}(v)^{s-r}}$ bound in \sariyuce{} \textit{et al.}'s work. This improvement is particularly evident if there are a few vertices of high degree involved in many $r$- and $s$-cliques, in which the use of an $\BigO{\alpha}$-orientation significantly reduces the number of out-neighbors that must be traversed. We additionally note that $\BigO{\alpha}$ tightly bounds the maximum out-degree in any acyclic orientation of a graph $G$,\footnote{This follows because if we let $d$ denote the degeneracy of the graph $G$, then the arboricity $\alpha$ tightly bounds the degeneracy in that $\alpha \leq d \leq 2 \alpha - 1$. Moreover, $d$ is the degeneracy of $G$ if and only if $G$ can be acyclically directed such that the maximum out-degree in the directed graph is $d$~\cite{Chrobak1991PlanarOW}.} and in this sense, our closed-form work bound for \ournd{} is never asymptotically worse than the work bound using any other acyclic orientation.

\end{document}